\documentclass[a4wide]{article}

\usepackage[english]{babel} 
\usepackage[utf8]{inputenc}  
\usepackage[T1]{fontenc}       

\usepackage{amssymb,amsfonts,amsthm}

\usepackage[hscale=0.7,vscale=0.8]{geometry}

\usepackage{enumerate}
\usepackage{xspace}
\usepackage{comment}
\usepackage{graphicx,subfig}
\usepackage{multirow}
\usepackage[all]{xy}
\usepackage[hyphens]{url}
\usepackage{pgf,tikz}
\usetikzlibrary{arrows}
\usepackage{algorithm,algorithmicx}
\usepackage{algpseudocode}
\usepackage{commath}
\usepackage[toc,page]{appendix}

\newcommand{\F}{{\mathbb F}}

\newcommand{\Z}{{\mathbb Z}}

\newcommand{\N}{{\mathbb N}}
\newcommand{\A}{{\mathbb A}}

\newcommand{\OO}{{\mathcal O}}

	\DeclareMathOperator{\Gal}{Gal}

	\DeclareMathOperator{\Prob}{\mathbb{P}}

	\DeclareMathOperator{\Disc}{Disc}
	
	\DeclareMathOperator{\Res}{Res}

	\DeclareMathOperator{\Norm}{N}

	\theoremstyle{plain}
	\newtheorem{theo}{Theorem}[section]
	\newtheorem{lem}[theo]{Lemma}
	\newtheorem{prop}[theo]{Proposition}
	\newtheorem{cor}[theo]{Corollary}
	\newtheorem{fact}[theo]{Fact}

	\theoremstyle{definition}
	\newtheorem{deff}[theo]{Definition}
	\newtheorem{ex}[theo]{Example}
	\newtheorem{experiment}{Experiment}

	\theoremstyle{remark}
	\newtheorem{rem}[theo]{Remark}
	\newtheorem{nott}[theo]{Notation}

\title{Selecting polynomials for the\\ Function Field Sieve}
\date{}
\author{Razvan Barbulescu\\
	Université de Lorraine, CNRS, INRIA, France\\
\texttt{razvan.barbulescu@inria.fr}}


\begin{document}
\maketitle
\begin{abstract}
The Function Field Sieve algorithm is dedicated to computing discrete
logarithms in a finite field $\F_{q^n}$, where $q$
is a small prime power.  The scope of this
article is to select good polynomials for this algorithm by defining and
measuring the size property and the so-called root and cancellation
properties. In particular we present an algorithm for rapidly testing a
large set of polynomials. Our study also explains the behaviour of inseparable
polynomials, in particular we give an easy way to see that the algorithm
encompass the Coppersmith algorithm as a particular case.  
\end{abstract}

\section{Introduction}

The Function Field Sieve (FFS) algorithm is dedicated to computing discrete
logarithms in a finite field $\F_{q^n}$, where $q$
is a small prime power. Introduced by Adleman in
\cite{Ad94} and inspired by the Number Field Sieve (NFS), the algorithm collects
pairs of polynomials $(a,b)\in\F_q[t]$ such that the norms of $a-bx$ in two function fields are
both smooth (the sieving stage), i.e having only irreducible divisors of
small degree. It
then solves a sparse linear system (the linear algebra stage), whose solutions, called virtual
logarithms,  allow to compute the
discrete algorithm of any element during a final stage (individual
logarithm stage).  

The choice of the defining polynomials $f$ and $g$ for the two function fields can be seen
as a preliminary stage of the algorithm. It takes a small amount of time
but it can greatly influence the sieving stage by slightly changing the
probabilities of smoothness.
In order to solve the discrete logarithm in $\F_{q^n}$, the main required
property of $f,g\in\F_q[t][x]$ is that their resultant $\Res_x(f,g)$ has
an irreducible factor $\varphi(t)$ of degree $n$. Various methods have
been proposed to build such polynomials.

The base-$m$ method of polynomial selection, proposed by Adleman \cite{Ad94}, consists in
choosing $\varphi(t)$ an irreducible polynomial of degree $n$, setting
$g=x-m$, where $m$ is a power of $t$, and $f$ equal to the base-$m$
expansion of $\varphi$. He obtained a subexponential complexity of
$L_{q^n}(\frac{1}{3},c)^{1+o(1)}$ with $c=\sqrt[3]{64/9}$. Adleman and
Huang \cite{AdHu99} chose $\varphi$ to be a sparse polynomial and obtained a constant $c=\sqrt[3]{32/9}$. They
also noted that the previously known algorithm of Coppersmith \cite{Cop84} can be seen as a particular case of the FFS. Finally, Joux and
Lercier \cite{JoLe02} introduced a method which, without improving the
complexity, behaves better in practice. It consists in selecting a
polynomial $f$ with small degree coefficients and then of randomly testing linear polynomials $g=g_1x+g_0$ until $\Res_x(f,g)$ has an irreducible factor
of degree $n$. 
In \cite{JoLe06} Joux and Lercier proposed two additional variants of
their methods. In the first one that can be called {\em Two rational
sides}, we add the condition that $f$ has degree $1$ in $t$. Its main
advantage is that its description does not require
the theory of function fields. The second variant, called the {\em Galois
improvement}, applies to the case where $n$ is composite.

In this paper we improve the method of Joux and Lercier
\cite{JoLe02} by showing how to select the non-linear polynomial $f$.
For that we follow the strategy that was developed 
in the factorization context. In particular Murphy
\cite{Mur99} introduced and used criteria which allow
to rapidly rank any set of polynomials.  See~\cite{Bai11}, for recent developments
in this direction.

Therefore, we introduce relevant functions for the sieving efficiency of
a polynomial, taking into account a size property, a so-called root
property that reflects the behaviour modulo small irreducible
polynomials, and a cancellation property, which is analogous to the real
roots property for the NFS. We also present efficient algorithms for
quantifying these properties. A special attention is given to the
particular case where $f$ is not separable. Indeed, this is a phenomenon
that has no analogue in the factorization world and that has strong
repercussions on the sieving efficiency.

\paragraph{Recent works on composite extensions} In the past few weeks,
the algorithm in \cite{JoLe06} was the object of further improvements in
\cite{Jo12}, \cite{GGMZ13} and \cite{Jo13}, all of them being very well
adapted to the case of composite extensions. The most important of them is
Joux's new algorithm which is especially suited to the fields $\F_{q^{2k}}$
with $q$ close to $k$ and whose complexity is $L(1/4+o(1))$. Moreover, under
some overhead hidden by the $o(1)$, Joux's algorithm can be adapted to the case
$\F_{q^k}$ when $q$ and $k$ are both prime. 

In this context, the FFS keeps its interest for both theoretical and practical
reasons. On the one hand, the FFS applies to a wider set of finite fields
as the $L(1/4+o(1))$ complexity was only obtained for constant
characteristic. On the
other hand, except for the composite case, the crossing point
between the FFS and Joux's algorithm \cite{Jo13} is still to be determined.

\paragraph{Outline} The article is organized as follows.
Section~\ref{sec:functions}
lists the properties which improve the sieve yield.
Section~\ref{sec:combining} combines the previously defined functions in
order to compare arbitrary polynomials and shows how to rapidly test a large number of candidate polynomials. 
Section~\ref{insep} focuses on the case of inseparable
polynomials and, in particular, the Coppersmith algorithm.
Section~\ref{sec:examples} applies the theoretic results to a series of examples. Finally, section~\ref{conclusion} makes the synthesis.

\section{Quantification functions}\label{sec:functions}

\subsection{Size property}

We start by deciding the degrees in $t$ and $x$ of the two polynomials
$f$ and $g$. The FFS has always been implemented for polynomials $f$ of
small coefficients in $t$ and for polynomials $g$ of degree $1$ in $x$,
like in~\cite{JoLe02}. It might be not obvious that this is the best
choice. For instance in the case of the NFS both for factorization
\cite{Montgomery,PrZi11} and discrete logarithm
\cite{JoLe03}, pairs of non-linear polynomials were used. 
In the following, we argue that the classical choice is indeed the best
one.
\medskip
Let us first recall the nature of the objects we have to test for
smoothness. The FFS collects coprime pairs $(a(t),b(t))\in \F_q[t]$ such
that the norms of $a-bx$ in the function fields of $f$ and $g$
are both smooth. These norms are polynomials in $t$ of a simple form:
denoting $F(X,Y)=f(\frac{X}{Y})Y^{\deg_x f}$ the homogenization of
$f(x)$, the norm of $a-bx$ is just $F(a,b)$. Similarly, we denote
$G(X,Y)$ the homogenization of $g(x)$ and the second norm is $G(a,b)$.
As a consequence, the polynomial selection stage can be restated as the
search for polynomials $f$ and $g$ such that $F(a,b)$ and $G(a,b)$ are
likely to be both smooth for coprime pairs $(a,b)$ in a given range.

As a first approximation, we translate this condition into the fact that
the degree of the product $F(a,b)G(a,b)$ is as small as possible. It will
be refined all along this paper.

\begin{fact}\label{linear_g}
        Assume that we have to compute discrete logarithms in $\F_{q^n}$,
        with polynomials $f,g\in\F_q[t][x]$, such that 
	$\deg_xg\leq \deg_x f$.
        If for bounded $a$ and $b$, the polynomials $f$ and $g$ minimize 
        $\max\{\deg F(a,b)G(a,b)\}$, then we have $\deg_xg=1$.
\end{fact}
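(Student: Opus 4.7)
The plan is to reduce the claim to a simple constrained minimization over the four degrees $d_f=\deg_x f$, $d_g=\deg_x g$, $t_f=\deg_t f$, $t_g=\deg_t g$. First, from the definition of the homogenization together with the hypothesis that $a,b$ are bounded with $\deg a,\deg b\le e$, one gets $\deg F(a,b)\le t_f+e\,d_f$, with equality attained by a generic pair in the sieving range (the ``no cancellation'' regime already discussed just before the fact); the same bound holds for $G$. Consequently, the quantity the statement asks us to minimize equals
\[
D:=t_f+t_g+e\,(d_f+d_g).
\]

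Next I would extract the algebraic constraint coming from the FFS setup. For the algorithm to handle $\F_{q^n}$, the resultant $\Res_x(f,g)$ must have an irreducible factor of degree $n$, so $\deg_t \Res_x(f,g)\ge n$. Combined with the Sylvester-type bound
\[
\deg_t \Res_x(f,g)\le t_f\,d_g + t_g\,d_f,
\]
this yields the feasibility inequality $t_f\,d_g+t_g\,d_f\ge n$.

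I would then perform the minimization in two rounds. For fixed $d_f,d_g$ with $1\le d_g\le d_f$, minimizing $t_f+t_g$ over non-negative integers subject to $t_f\,d_g+t_g\,d_f\ge n$ is a trivial linear program: since each unit of $t_g$ buys $d_f\ge d_g$ units of the constraint, the optimum sits at $t_f=0$, $t_g=\lceil n/d_f\rceil$, giving $D\ge \lceil n/d_f\rceil+e\,(d_f+d_g)$. The remaining minimization over $d_g\ge 1$ is strictly increasing in $d_g$, so the optimum is $d_g=1$, which is exactly the claim.

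The main obstacle is not algebraic but rather the need to certify that the bound $\deg F(a,b)\le t_f+e\,d_f$ is attained by some pair $(a,b)$ in the sieving range, so that the ``$\max$'' in the statement genuinely equals the right-hand side expression; this is the generic ``no cancellation'' hypothesis invoked informally in the paragraph preceding the fact. A minor secondary point is that $d_g\ge 1$ must be imposed by hand (otherwise $g\in\F_q[t]$ and the FFS construction degenerates), which is why the optimum is $1$ and not $0$, and that the integer rounding in $\lceil n/d_f\rceil$ does not disturb the strict monotonicity in $d_g$.
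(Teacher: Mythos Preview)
Your proposal is correct and follows essentially the same route as the paper's argument: both set up the objective $t_f+t_g+e(d_f+d_g)$ and the resultant constraint $t_f d_g+t_g d_f\ge n$, then argue in two steps that one should take $t_f$ minimal (you justify this cleanly as a linear program in $(t_f,t_g)$ for fixed $(d_f,d_g)$, the paper more informally sets $\deg_t f=\epsilon$ small) and then that the remaining expression is increasing in $d_g$, forcing $d_g=1$. Your presentation is in fact tidier than the paper's, but the underlying idea is the same.
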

\noindent\textit{Argument.}
Let $(a,b)$ be any pair of maximal degree $e$, and assume that
there is no cancellation of the monomials in $F(a,b)$ and $G(a,b)$
respectively. Then one has
\begin{equation}
\label{norm_degree}
\deg\left(F(a,b) G(a,b)\right)=\deg_t g+e\deg_x g+\deg_t f+e\deg_xf.
\end{equation}
The degree of the resultant of $f$ and $g$ can be bounded by
$ \deg \Res(f,g) \leq \deg_xf\deg_t g+\deg_xg\deg_t f$.
Since we need this resultant to be of degree at least $n$, we impose
\begin{equation}
\label{res_degree}
\deg_xf\deg_t g+\deg_xg\deg_t f\geq n.
\end{equation}
 For a fixed value of the left hand side in
Equation~\ref{res_degree}, in order to minimize the expression in Equation
\ref{norm_degree}, we need to minimize $\deg_tf$. Therefore we set
$\deg_tf$ as small as possible, let us call this degree~$\epsilon$.
Hence the optimization problem becomes
\begin{eqnarray*}
\text{minimize} & &\left( \deg_xf+ \epsilon + e\deg_xg+\deg_t g\right)\\
	     \text{when}& &\deg_xf \deg_t g+\epsilon \deg_xg\geq n.
\end{eqnarray*}
Since one can decrease $\deg_x g$ without changing too much the left
hand side of
the constraint, the choice $\deg_x g=1$ is optimal. $\square$ 
\medskip

In the rest of the article we simply write $d$ for $\deg_xf$. The degree
of $g$ in $t$ is then about $n/d$.

\begin{rem}
    We decided to optimize the degree of the product of the norms. In
    terms of smoothness probability, it is only pertinent if both norms
    have similar degrees. More precisely, as  the
    logarithm of Dickman's
    rho function is concave, it can be shown that it is optimal to balance the
    degrees of the norms. Hence sensible choices of the parameters are
    such that
$de\approx \frac{n}{d}+e$.
\end{rem}


We are now ready to quantify the size property for a single polynomial
$f$. It clearly depends on the bound $e$ on both $\deg a$ and $\deg b$. 
In the following definition, we also take into account the skewness
improvement as implemented in \cite{DeGaVi12}, that is, we set the
skewness $s$ to be the difference between the bounds on the degree of $a$
and of $b$.  We can then define sigma to match the
average degree of $F(a,b)$ for $(a,b)$ in a domain of polynomials of degrees
bounded by $\lfloor e+s/2\rfloor$ and $\lfloor e-s/2\rfloor$, when no cancellation occurs among the
monomials of $F(a,b)$. This translates into the following formal
definition.

\begin{deff}  Let $f\in\F_q[t][x]$ be a polynomial, $s$ the skewness parameter 
	and $e$ the sieve size parameter. We define:
	\begin{equation}
		\sigma(f,s,e)=\sum_{\begin{array}{c}0\leq d_a\leq
			e+s/2\\0\leq d_b \leq
		e-s/2\end{array}}p_{d_a,d_b}\max_{i\in[0,d]}\bigg(\deg(f_i)+id_a+(d-i)d_b\bigg),
	\end{equation}
	with $p_{d_a,d_b}=(q-1)^2q^{d_a+d_b}/q^{\lfloor
e+s/2\rfloor+\lfloor e-s/2\rfloor+2}$.
\end{deff}
\subsection{Root property}

As a first approximation, for a random pair $(a,b)\in \F_q[t]^2$,
$F(a,b)$ has a smoothness probability of the same order of magnitude as
random polynomials of the same degree. Nevertheless, we shall show that for a
fixed size property, some polynomials improve this probability by a
factor of $2$ or more. 

Consider the example of
$f=x(x-1)-(t^{64}-t)$ in $\F_2[t][x]$. For all monic irreducible
polynomials $\ell$ of
degree at most $3$, $\ell$ divides the constant term, so $f$ has two roots
modulo $\ell$. For each such $\ell$ and for all $b$ non divisible by $\ell$,
there are $2$ residues of $a$ modulo $\ell$ such that $F(a,b)\equiv 0\mod
\ell$. Therefore, for all $\ell$ of degree $3$ or less, the probability
that $\ell$ divides the norm is heuristically twice as large as the
probability that it divides a random polynomial. This influences in turn the
smoothness probability. This effect is quantified by the function alpha
that we now introduce.

\subsubsection{Definition of alpha}

Introduced by Murphy \cite{Mur99} in the case of the NFS, alpha can be
extended to the case of the FFS.  Let $\ell$ be a monic irreducible
polynomial in $\F_q[t]$ and let us call $\ell$-part of a polynomial $P$,
the largest power of $\ell$ in $P$.  We shall prove that the quantity
$\alpha_\ell$ below is the degree difference between the $\ell$-part of a
random polynomial and the $\ell$-part of $F(a,b)$ for a
random coprime pair $(a,b)\in\F_q[t]$.

Let us first properly define the average of a function on a set
of polynomials.

\begin{deff}
Let $v$ be a real function of one or two polynomial variables
$v:\F_q[t]\rightarrow \mathbb{R}$ or $v:\F_q[t]\times \F_q[t]\rightarrow
\mathbb{R}$. Let $S$ be a subset of the domain of $v$.
For any pair $(a,b)$ of polynomials, we write
$\deg(a,b)$ for $\max(\deg(a),\deg(b))$.
If the limit below exists we call it
average of $v$ over $S$ and denote
it by $\mathbb{A}(v,S)$:
\begin{equation*}
    \mathbb{A}(v,S) = 
\lim_{N\rightarrow\infty}\frac{\sum_{s\in S,\deg(s)\leq
N}\varphi(s)}{\#\{s\in S\mid \deg(s)\leq N\}}.  
\end{equation*} 
\end{deff}

\begin{deff}
	Let $L$ denote the set of monic irreducible polynomials in
	$\F_q[t]$. Put $D=\{(a,b)\in \F_q[t]^2\mid
\gcd(a,b)=1\}$. Take a non-constant polynomial $f\in \F_q[t][x]$. When the right hand
	members are defined, we set for all $\ell\in L$:
\begin{eqnarray*}
\alpha_\ell(f)&=&\deg(\ell)\ \Big(\A\left(v_\ell(P),\{P\in\F_q[t]\}\right)
    -\A\left(v_\ell(F(a,b)),\{(a,b)\in D\}\right)\Big),\\
	\alpha(f)&=&\sum_{\ell \in L}\alpha_\ell(f),
\end{eqnarray*} 
where $v_\ell$ is the valuation at $\ell$. The infinite sum which defines
$\alpha(f)$ must be seen as a formal notation and by its sum we denote the
limit when $b_0$ goes to infinity of $\alpha(f,b_0):=\sum_{\ell\in L,\deg
\ell\leq b_0}\alpha_\ell(f)$.
\end{deff}

\begin{nott}
	For all irreducible polynomial $\ell\in\F_q[t]$, $\Norm(\ell)$
denotes the number of residues modulo $\ell$, i.e. $q^{\deg \ell}$.
\end{nott}

We call affine root of $f$ modulo $\ell^k$ any $r\in \F_q[t]$ such that
$\deg r< k \deg \ell$ and $F(r,1)=f(r)\equiv 0\mod \ell^k$. Also we call
projective root of $f$ modulo $\ell^k$ the polynomials $r\in\F_q[t]$ such
that $\ell \mid r$, $\deg r< k \deg\ell$ and $F(1,r)\equiv 0\mod \ell^k$.
Note that, when $k=1$ any affine and projective roots can be seen as
an element of $\mathbb{P}^1(\F_q(t))$, hence we denote them $(r:1)$ and
$(1:r)$ respectively. 

\begin{nott} We denote by $S(f,\ell)$ the set of affine and projective
    roots modulo $\ell^k$ for any $k$:
	\begin{equation}\label{S}
		S(f,\ell)=\left\{(r,k)\mid k\geq 1, r\text{ affine or
			projective root of $f$ modulo }\ell^k \right\}.
\end{equation}
\end{nott}

\begin{prop} \label{ca} Let $f\in\F_q[t][x]$ and $\ell\in\F_q[t]$ a monic irreducible
	polynomial. Then $\alpha_\ell$ exists and we have 
	\begin{equation}\label{approx}
		\alpha_\ell(f)=\deg \ell
		\left(\frac{1}{\Norm(\ell)-1}-\frac{\Norm(\ell)}{\Norm(\ell)+1}\sum_{(r,k)\in
		S(f,\ell)}\frac{1}{\Norm(\ell)^k}\right).
	\end{equation}
\end{prop}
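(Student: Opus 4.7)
The plan is to compute each of the two averages appearing in the definition of $\alpha_\ell(f)$ separately, and then subtract.

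\textbf{Step 1: The average $\ell$-valuation of a random polynomial.}
I would first observe that for a uniformly random polynomial $P\in\F_q[t]$, the probability that $\ell^k\mid P$ tends to $1/\Norm(\ell)^k$ as $\deg P\to\infty$, because inside $\F_q[t]/\ell^k$ (which has $\Norm(\ell)^k$ elements) the vanishing residue class is unique. Therefore
\begin{equation*}
\A\bigl(v_\ell(P),\{P\in\F_q[t]\}\bigr)=\sum_{k\geq 1}\Prob(\ell^k\mid P)=\sum_{k\geq 1}\frac{1}{\Norm(\ell)^k}=\frac{1}{\Norm(\ell)-1}.
\end{equation*}

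\textbf{Step 2: The average $\ell$-valuation of $F(a,b)$ over coprime pairs.}
The key is that whether $\ell^k\mid F(a,b)$ depends only on $(a\bmod\ell^k,b\bmod\ell^k)$, and that the global coprimality condition $(a,b)\in D$ only enforces, locally at $\ell$, that $(a,b)\not\equiv(0,0)\bmod\ell$ (the other primes contribute independently and cancel between numerator and denominator). Hence I would reduce to a counting problem in $(\F_q[t]/\ell^k)^2$:
\begin{itemize}
\item[] The number of admissible residue pairs is $\Norm(\ell)^{2k}-\Norm(\ell)^{2(k-1)}=\Norm(\ell)^{2(k-1)}(\Norm(\ell)^2-1)$.
\end{itemize}
For the numerator, I exploit homogeneity of $F$: if $F(a_0,b_0)\equiv 0\bmod\ell^k$ then $F(\lambda a_0,\lambda b_0)\equiv 0\bmod\ell^k$ for every $\lambda$. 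So the zero set, intersected with the admissible pairs, is a union of orbits of the scaling action of $(\F_q[t]/\ell^k)^\times$, an action of order $\Norm(\ell)^{k-1}(\Norm(\ell)-1)$. The orbits are in bijection with the points of $\PP^1(\F_q[t]/\ell^k)$ at which $F$ vanishes, and those points are parametrised exactly as $(r:1)$ for an affine root $r$ modulo $\ell^k$ and $(1:r)$ for a projective root $r$ modulo $\ell^k$ — i.e.\ by the elements $(r,k)$ of $S(f,\ell)$ with that particular $k$. Writing $N_k$ for the number of such elements, the numerator is $N_k\cdot\Norm(\ell)^{k-1}(\Norm(\ell)-1)$, hence
\begin{equation*}
\Prob_D\!\bigl(\ell^k\mid F(a,b)\bigr)=\frac{N_k}{\Norm(\ell)^{k-1}(\Norm(\ell)+1)}.
\end{equation*}

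\textbf{Step 3: Summation and conclusion.}
Summing over $k\geq 1$ and regrouping by root gives
\begin{equation*}
\A\bigl(v_\ell(F(a,b)),D\bigr)=\frac{\Norm(\ell)}{\Norm(\ell)+1}\sum_{(r,k)\in S(f,\ell)}\frac{1}{\Norm(\ell)^k},
\end{equation*}
and subtracting this from Step~1, after multiplication by $\deg\ell$, yields exactly the formula \eqref{approx}.

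\textbf{Main obstacle.} The computations themselves are routine; the delicate point is justifying that the averages exist and equal the naive probabilities. Concretely, one has to argue that, as $N\to\infty$, residues modulo $\ell^k$ become equidistributed among polynomials of degree $\leq N$ and among coprime pairs with $\deg(a,b)\leq N$, and that the contribution of large $k$ (where the tail $\sum_{k\geq k_0}\Norm(\ell)^{-k}$ of $v_\ell$ is being estimated) is uniformly controlled so that the limit in $N$ and the sum over $k$ can be exchanged. The coprime case also requires checking that conditioning on $\gcd(a,b)=1$ only alters the distribution of $(a\bmod\ell,b\bmod\ell)$ by removing the class $(0,0)$ — a standard sieve calculation whose error decays geometrically in $\deg(a,b)$.
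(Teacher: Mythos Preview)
Your proposal is correct and follows essentially the same approach as the paper: both compute $\A(v_\ell(P))=1/(\Norm(\ell)-1)$, both replace the global coprimality $\gcd(a,b)=1$ by the local condition $\ell\nmid\gcd(a,b)$, both count pairs modulo $\ell^k$ matching a given root to obtain the factor $\Norm(\ell)/(\Norm(\ell)+1)\cdot\Norm(\ell)^{-k}$, and both flag the exchange of limit and sum as the delicate point (the paper defers it to a lemma in the appendix). The only cosmetic difference is that you phrase the count via orbits of the scaling action on $\PP^1(\F_q[t]/\ell^k)$, while the paper counts affine and projective roots separately; also, the paper spends a short paragraph proving that $\sum_{(r,k)\in S(f,\ell)}\Norm(\ell)^{-k}$ converges (by grouping into at most $\deg_x f$ infinite $\ell$-adic sequences plus finitely many leftovers), a point you leave implicit.
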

\begin{proof}

    In order to prove the convergence of Equation~\ref{approx}, note that some elements of
	$S(f,\ell)$ group into infinite sequences
	$\{(r^{(k)},k)\}_k$ with $r^{(k)}\equiv
r^{(k-1)}\mod \ell^{k-1}$. Since each infinite sequence defines a root of
$f$ in the $\ell$-adic completion of
$\F_q(t)$, there are at most $d$ such sequences, whose contributions
converge geometrically. There are only finitely
many remaining elements of $S(f,\ell)$ because otherwise one could extract an
additional $\ell$-adic root. This proves the convergence. 

In order to show the equality, note that we have
$\A(v_\ell(P),\F_q[t])=\sum_{k=1}^\infty\frac{1}{\Norm(\ell)^k}=\frac{1}{\Norm(\ell)-1} $.
Indeed, for all $k\in\N$,
the density of the set of polynomials divisible by $\ell^k$ is the inverse
of the number of residues modulo $\ell^k$, which is $q^{\deg(\ell)k}=\Norm(\ell)^k$.

Let us compute $a_{hom}^{(\ell)}:=\A(v_\ell(F(a,b)),\{(a,b)\in \F_q[t]^2\mid
\gcd(a,b)\not\equiv 0 \bmod \ell\})$. This corresponds to the
contribution of $F(a,b)$ in the definition of $\alpha_\ell$. The
condition $\gcd(a,b)=1$ has been replaced by a local condition. Since we
are only interested in $\ell$-valuations, this does not change the
result.
The number of $\ell$-coprime pairs $(a,b)$ of degree less than
$k\deg \ell$ is $\Norm(\ell)^{2k}-\Norm(\ell)^{2k-2}$. Such a pair $(a,b)$
satisfies $\ell^k\mid F(a,b)$ if and only if $\ell\nmid b$ and $ab^{-1}$
is an affine root modulo $\ell^k$ or $\ell\mid b$ and $ba^{-1}$ is a
projective root modulo $\ell^k$.
For each affine root
$r$, the number of $\ell$-coprime pairs $(a,b)$ such that $a\equiv br\mod \ell^k$
equals the number of choices for $b$, which is
$\Norm(\ell)^k-\Norm(\ell)^{k-1}$. Similarly, for each projective root, the
number of coprime pairs $(a,b)$
such that $b\equiv ar\mod \ell^k$ is the number of choices for $a$, which is
$\Norm(\ell)^k-\Norm(\ell)^{k-1}$. Hence, it follows that
for each $(r,k)\in S(f,\ell)$,
the probability that the residues of a coprime pair $(a,b)$ modulo
$\ell^k$ match the root $(r,k)$ is given by the formula below, 
where $(a:b)\equiv r \mod \ell^k$ is short for
$r\equiv ab^{-1}\mod \ell^k$ when $\ell\nmid b$ or $r\equiv ba^{-1}\mod
\ell^k $ when $\ell\mid b$. 
\begin{equation}\label{P(r)}
\Prob((a:b)\equiv r \bmod \ell^k)=
\frac{1}{\Norm(\ell)^k}\frac{\Norm(\ell)}{\Norm(\ell)+1},
\end{equation}

The following equation is intuitive so that we postpone its proof until
Lemma~\ref{lem_appendix} in the Appendix. The technicalities arise due to
the fact that calling the quantities ``probability'' is formally
incorrect, and must be replaced by natural densities.
\begin{equation}\label{appendix}
	\begin{array}{rcl}
		a_{hom}^{(\ell)}(f) &=&\sum_{k=1}^\infty k\Prob(v_\ell(F(a,b))=k)\\
	=\sum_{k=1}^\infty\Prob(v_\ell(F(a,b))\geq k)&=&\sum_{(r,k)\in
	S(f,\ell)}\Prob((a:b)\equiv r \bmod \ell^k).
\end{array}
\end{equation}
Replacing $a_{hom}^{(\ell)}$ in formula $\alpha_\ell(f)=\frac{\deg
\ell}{\Norm(\ell)-1}-\deg(\ell)a_{hom}^{(\ell)}(f)$ and using Equation~\ref{P(r)} yields the desired result.

\end{proof}
If $f$ has only simple roots modulo $\ell$, then by Hensel's Lemma $f$ has
the same number of roots modulo every power of $\ell$. We obtain the
following formula.
\begin{cor}\label{cor_ell}
    Let $f\in\F_q[x][t]$ and $\ell$ a monic irreducible polynomial
	in $\F_q[t]$ such that the affine and projective roots of $f$
        modulo $\ell$ are simple and call $n_\ell$ their number. Then 
	\begin{equation}\label{cor_ell_eq}
		\alpha_\ell(f)=	\frac{\deg
		\ell}{\Norm(\ell)-1}\left(1-\frac{
                \Norm(\ell)}{\Norm(\ell)+1} n_\ell\right).
	\end{equation}
\end{cor}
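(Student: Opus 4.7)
The plan is to derive this corollary as an immediate specialization of Proposition~\ref{ca}. Starting from the formula
\[
\alpha_\ell(f)=\deg \ell\left(\frac{1}{\Norm(\ell)-1}-\frac{\Norm(\ell)}{\Norm(\ell)+1}\sum_{(r,k)\in S(f,\ell)}\frac{1}{\Norm(\ell)^k}\right),
\]
the only remaining task is to evaluate the inner sum under the simplicity hypothesis. So the strategy is: (i) count, for each $k\geq 1$, the number of pairs $(r,k)\in S(f,\ell)$; (ii) sum a geometric series; (iii) substitute and simplify.

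For step (i), I would invoke Hensel's Lemma separately on affine and projective roots. For an affine root $r_0\in\F_q[t]/(\ell)$ of $f$ that is simple (i.e.\ $f'(r_0)\not\equiv 0\bmod \ell$), Hensel lifts $r_0$ uniquely to a root $r^{(k)}$ of $f$ modulo $\ell^k$ for every $k\geq 1$, with $r^{(k)}\equiv r^{(k-1)}\bmod \ell^{k-1}$. For a projective root, the symmetric statement should be applied to the reciprocal polynomial $\tilde f(y)=y^d f(1/y)$: a simple projective root modulo $\ell$ corresponds to a simple root $y=0$ of $\tilde f$ after reduction, and Hensel again lifts it uniquely to each $\ell^k$. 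Thus simplicity of all $n_\ell$ roots modulo $\ell$ implies that there are exactly $n_\ell$ pairs $(r,k)\in S(f,\ell)$ for every fixed $k$, and no extra sporadic ones (the latter would require a repeated root at some stage).

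For step (ii), the resulting sum factorizes as
\[
\sum_{(r,k)\in S(f,\ell)}\frac{1}{\Norm(\ell)^k}=n_\ell\sum_{k=1}^\infty \frac{1}{\Norm(\ell)^k}=\frac{n_\ell}{\Norm(\ell)-1}.
\]
Substituting into the expression from Proposition~\ref{ca} and factoring $1/(\Norm(\ell)-1)$ out yields exactly Equation~\ref{cor_ell_eq}.

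The main (and essentially only) delicate point is handling the projective case cleanly: one must verify that ``projective root modulo $\ell^k$'' really is in bijection, via the substitution $y\mapsto 1/y$ (or the homogeneous viewpoint on $\mathbb{P}^1$), with affine roots at $0$ of the reciprocal polynomial, so that Hensel's Lemma applies uniformly. Once this correspondence is in place, the corollary is a one-line substitution.
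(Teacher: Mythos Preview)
Your proposal is correct and follows exactly the same approach as the paper: the paper's proof is the single sentence preceding the corollary, namely that by Hensel's Lemma the number of roots modulo $\ell^k$ is constant in $k$, after which one substitutes into Proposition~\ref{ca} and sums the geometric series. Your write-up simply makes this explicit (including the projective case), which is fine.
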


\subsubsection{The case of linear polynomials}
Showing that $\alpha(f)$ converges is not trivial and, to our knowledge, it
is not proven in the NFS case. Let us first show that
$\alpha(g)$ converges for linear polynomials $g$.
This requires the following classical identity.

\begin{lem} (Chapter I,\cite{Ap90}) \label{Möbius}
	Let $\mu$ denote Möbius' function and let $x$ be such that $|x|<1$.
        Then we have
	\begin{equation*}
	\sum_{h\geq
	1}\mu(h)\frac{x^h}{1-x^h}=x.
\end{equation*}
\end{lem}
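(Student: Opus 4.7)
The plan is to prove the identity by expanding each term as a geometric series, interchanging the order of summation, and then invoking the fundamental identity $\sum_{h\mid n}\mu(h)=[n=1]$. Concretely, since $|x|<1$, for each $h\geq 1$ we may write
\begin{equation*}
\frac{x^h}{1-x^h}=\sum_{k\geq 1}x^{hk},
\end{equation*}
so that the left-hand side becomes the double series $\sum_{h\geq 1}\sum_{k\geq 1}\mu(h)x^{hk}$.

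The second step is to reorder the sum according to $n:=hk$: grouping terms by the value of $n$, this becomes
\begin{equation*}
\sum_{n\geq 1}x^n\sum_{h\mid n}\mu(h).
\end{equation*}
By the standard Möbius identity, the inner sum vanishes for $n\geq 2$ and equals $1$ for $n=1$, leaving exactly $x$.

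The only real obstacle is justifying the interchange of summation, which requires absolute convergence of the double series. This is straightforward: since $|\mu(h)|\leq 1$ and $|x|<1$,
\begin{equation*}
\sum_{h\geq 1}\sum_{k\geq 1}|\mu(h)|\,|x|^{hk}\leq \sum_{h\geq 1}\frac{|x|^h}{1-|x|^h}\leq \frac{1}{1-|x|}\sum_{h\geq 1}|x|^h<\infty,
\end{equation*}
so Fubini's theorem (for discrete measures) applies and the rearrangement is legitimate. Combining the three steps yields the claimed equality.
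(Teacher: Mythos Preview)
Your proof is correct and follows essentially the same approach as the paper's (commented-out) proof: expand each term as a geometric series to obtain the double series $\sum_{h\geq 1}\sum_{k\geq 1}\mu(h)x^{hk}$, justify absolute convergence, regroup by the exponent $n=hk$, and apply $\sum_{h\mid n}\mu(h)=[n=1]$. Your handling of the absolute-convergence estimate is in fact cleaner than the paper's, since you work with $|x|$ throughout rather than implicitly assuming $x$ is real and small.
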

\begin{nott}
	We denote by $I_k$ the number of monic degree-$k$ irreducible
        polynomials in $\F_q[t]$.
\end{nott}
\begin{theo}\label{alpha_x}
	Let $g\in\F_q[t][x]$ be such that $\deg_xg=1$. Then 
	\begin{equation*}
		\alpha(g)=\frac{1}{q-1}.		
	\end{equation*}
\end{theo}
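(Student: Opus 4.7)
The plan is to evaluate $\alpha_\ell(g)$ explicitly for each monic irreducible $\ell$ via Proposition~\ref{ca}, then collapse the resulting sum using the Möbius identity of Lemma~\ref{Möbius} together with the standard identity $q^k=\sum_{d\mid k}dI_d$.

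First I would analyze the set $S(g,\ell)$ for the linear polynomial $g=g_1(t)x+g_0(t)$, assuming $\gcd(g_0,g_1)=1$. Fix a level $k\geq 1$. If $\ell\nmid g_1$, the affine equation $g_1 r+g_0\equiv 0\bmod\ell^k$ has the unique solution $r=-g_0g_1^{-1}\bmod\ell^k$, and the projective equation $g_1+g_0 r\equiv 0\bmod\ell^k$ with $\ell\mid r$ forces $\ell\mid g_1$, contradiction; so the only contribution is one affine root. If $\ell\mid g_1$ (hence $\ell\nmid g_0$), the affine equation has no solution, while the projective equation gives the unique solution $r=-g_1g_0^{-1}\bmod\ell^k$, which is automatically divisible by $\ell$. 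In either case $S(g,\ell)$ contains exactly one root at each level, so
\[
\sum_{(r,k)\in S(g,\ell)}\Norm(\ell)^{-k}=\sum_{k\geq 1}\Norm(\ell)^{-k}=\frac{1}{\Norm(\ell)-1}.
\]
Plugging this into Proposition~\ref{ca} and simplifying the factor $1-\frac{\Norm(\ell)}{\Norm(\ell)+1}=\frac{1}{\Norm(\ell)+1}$ gives
\[
\alpha_\ell(g)=\frac{\deg\ell}{\Norm(\ell)^2-1}.
\]

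Next I would group the sum over $\ell$ by degree, obtaining
\[
\alpha(g)=\sum_{k\geq 1}\frac{kI_k}{q^{2k}-1}.
\]
Möbius-inverting the well-known identity $q^k=\sum_{d\mid k}dI_d$ yields $kI_k=\sum_{d\mid k}\mu(k/d)q^d$. Substituting this and reindexing with $k=dm$, the double sum becomes
\[
\alpha(g)=\sum_{d\geq 1}q^d\sum_{m\geq 1}\mu(m)\frac{(q^{-2d})^m}{1-(q^{-2d})^m}.
\]
Applying Lemma~\ref{Möbius} with $x=q^{-2d}$ (valid since $|x|<1$) collapses the inner sum to $q^{-2d}$, so
\[
\alpha(g)=\sum_{d\geq 1}q^{-d}=\frac{1}{q-1},
\]
as claimed.

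The only mildly delicate points are the convergence of $\sum_\ell\alpha_\ell(g)$ and the interchange of summation used in reindexing; both follow from absolute convergence because $|\mu(m)|\leq 1$ and $q^{-2d}\leq q^{-2}<1$ makes the geometric estimate $\sum_m\tfrac{q^{-2dm}}{1-q^{-2dm}}\leq 2q^{-2d}$ summable over $d$. I should also note the harmless hypothesis $\gcd(g_0,g_1)=1$: this is the natural primitive case used by the FFS, since a non-trivial content would merely rescale the norms and contribute a bounded correction unrelated to the generic behaviour described by $\alpha$.
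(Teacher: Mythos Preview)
Your proof is correct and follows essentially the same route as the paper: both obtain $\alpha_\ell(g)=\deg\ell/(\Norm(\ell)^2-1)$, group by degree to get $\sum_{k\geq 1}kI_k/(q^{2k}-1)$, and collapse this via M\"obius inversion and Lemma~\ref{Möbius}. The only cosmetic difference is that you derive the per-$\ell$ formula directly from Proposition~\ref{ca} with a case split on whether $\ell\mid g_1$, whereas the paper simply invokes Corollary~\ref{cor_ell} with $n_\ell=1$; your explicit hypothesis $\gcd(g_0,g_1)=1$ is in fact tacitly needed for that invocation as well.
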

\begin{proof}
	Let $L$ be the set of monic irreducible polynomials in $\F_q[t]$.
        We shall prove that, when $b_0$ goes to infinity, $\sum_{\ell\in L,\deg\ell\leq b_0}
           \alpha_\ell(g)$ tends to $\frac{1}{q-1}$.
	In Equation~\ref{cor_ell_eq} one has $n_\ell=1$ and therefore 
	\begin{equation}
	\sum_{\ell\in L,\deg\ell\leq b_0}\alpha_\ell(g)=
          \sum_{\ell \in L, \deg \ell \leq b_0}
		\frac{\deg(\ell)}{\Norm(\ell)^2-1}=
          \sum_{k\leq b_0}\frac{k I_k}{q^{2k}-1}.
\end{equation}

Since $kI_k=\sum_{h\mid k}\mu(h)q^\frac{k}{h}$, the series transforms into a
double series for which we shall prove the absolute convergence and shall
compute the sum: 
\begin{equation*}
\sum_{k\geq 1}\sum_{h\mid
k}\frac{1}{q^{2k}-1}\mu(h)q^\frac{k}{h}.
\end{equation*}
The absolute value of the term
$\left|\mu(h)\frac{q^{\frac{k}{h}}}{q^{2k}-1}\right|$
is bounded by 
$\frac{q^{\frac{k}{h}}}{q^{2k}-1}$. It follows easily that the sum is
bounded by $\frac{4}{q-1}$. 
Therefore, we can change the
summation order: 
\begin{equation}
	\label{moeb}
	\sum_{k\geq 1}\sum_{h\mid
	k}\mu(h)\frac{q^{\frac{k}{h}}}{q^{2k}-1 }=\sum_{i\geq 1}\left(\sum_{h\geq 1}\mu(h)\frac{q^i}{q^{2hi}-1}\right).
\end{equation}

Applying Lemma~\ref{Möbius} to $x=\frac{1}{q^{2i}}$ leads to $\sum_{h\geq
1}\mu(h)\frac{1}{q^{2hi}-1}=\frac{1}{q^{2i}}$. This shows that, when $b_0$
goes to infinity, $\sum_{\deg\ell\leq b_0}\alpha_\ell(g)$ tends to $\sum_{i\geq
1}\frac{1}{q^i}=\frac{1}{q-1}$. 

\end{proof}

We conclude the subsection by showing the convergence of alpha. Let now
$\mathcal{C}$ be a (singular or non-singular) projective curve. Call $\mathcal{P}_k(\mathcal{C})$ the set of points of
$\mathcal{C}$ with coefficients in $\F_{q^k}$ and $P_k(\mathcal{C})$ its
cardinality. Next, call $\mathcal{P}_k'(\mathcal{C})$ 
the set of points in $\mathcal{P}_k(\mathcal{C})$ whose $t$-coordinate does not
belong to a strict subfield of $\F_{q^k}$. Finally, we denote by 
$P_k'(\mathcal{C})$ its cardinality.

We shall need the following intermediate result.
\begin{lem}\label{P'P} Let $\mathcal{C}$ be projective plane curve of degree
	$d_0$ defined over $\F_q$ 
        and let $g_0=(d_0-1)(d_0-2)/2$ be its arithmetic genus.
	Then, for all $k\geq 1$, 
	\begin{equation}\label{lemme}
		\left|P_k'(\mathcal{C})-(q^k+1)\right|<(4g_0+6)q^{\frac{k}{2}}.
	\end{equation}
\end{lem}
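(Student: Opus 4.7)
The plan is to reduce the estimate to Weil's bound applied to a smooth model of $\mathcal{C}$, then to account separately for the singular points of $\mathcal{C}$ and for the $\F_{q^k}$-points whose $t$-coordinate fails to generate $\F_{q^k}$. One may tacitly assume $\mathcal{C}$ is absolutely irreducible, the only setting in which the main term $q^k+1$ is expected.

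First, let $\pi\colon \tilde{\mathcal{C}}\to\mathcal{C}$ denote the normalization, so that $\tilde{\mathcal{C}}$ is a smooth projective curve of geometric genus $g$. The classical identity $g_0-g=\sum_P\delta_P$, summed over the singular points $P$ of $\mathcal{C}$, gives $g\leq g_0$, and the Weil bound applied to $\tilde{\mathcal{C}}$ yields
\[ \left|P_k(\tilde{\mathcal{C}})-(q^k+1)\right|\leq 2g\,q^{k/2}\leq 2g_0\,q^{k/2}. \]
Above the smooth locus $\pi$ is an isomorphism, so the defect $|P_k(\tilde{\mathcal{C}})-P_k(\mathcal{C})|$ is controlled by the singular fibres; using $m_P-1\leq\binom{m_P}{2}\leq \delta_P$ one obtains
\[ |P_k(\tilde{\mathcal{C}})-P_k(\mathcal{C})|\leq\sum_P(m_P-1)\leq g_0-g\leq g_0. \]

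It remains to bound $P_k(\mathcal{C})-P_k'(\mathcal{C})$. By definition this quantity counts the $\F_{q^k}$-points of $\mathcal{C}$ whose $t$-coordinate belongs to a strict subfield of $\F_{q^k}$, plus the at most $d_0$ points at infinity. The $t$-coordinate of such an affine point lies in $\F_{q^{k/p}}$ for some prime $p\mid k$, hence in a set of cardinality at most $q^{k/2}$; for each such $t_0$, the vertical line $\{t=t_0\}$ is not a component of $\mathcal{C}$, and B\'ezout supplies at most $d_0$ intersection points, so this step contributes $O(d_0\,q^{k/2})$. Since the arithmetic genus formula gives $d_0\leq 2+\sqrt{2g_0}$, an elementary manipulation using $q^{k/2}\geq 1$ absorbs the additive constants and combines the three contributions into the single term $(4g_0+6)q^{k/2}$.

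The main obstacle is this third step: keeping the coefficient of $q^{k/2}$ below $4g_0+6$ when all three bounds are combined. The first two steps are standard (Weil's theorem and a $\delta$-invariant count), whereas the last requires both the combinatorial reduction ``$t$ is not primitive in $\F_{q^k}$ implies $t\in\F_{q^{k/p}}$ for some prime $p\mid k$'' and a slightly wasteful appeal to B\'ezout, whose slack is precisely what allows the final constant to match the announced $4g_0+6$.
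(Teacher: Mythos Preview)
Your overall strategy---Weil's bound on the normalisation, a $\delta$-invariant count for the singular locus, and a combinatorial bound on points whose $t$-coordinate lies in a proper subfield---is exactly the decomposition the paper uses. Two quantitative details, however, keep your argument from reaching the constant $4g_0+6$.

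First, the claim that the elements of $\F_{q^k}$ lying in some proper subfield form ``a set of cardinality at most $q^{k/2}$'' is false: for $k=6$ the union $\F_{q^3}\cup\F_{q^2}$ has $q^3+q^2-q>q^{k/2}$ elements. The paper instead sums over all proper divisors and shows $\sum_{d\mid k,\,d>1}q^{k/d}\le 2q^{k/2}$; this factor of $2$ must be carried through step three, giving $2(g_0+3)q^{k/2}$ there.

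Second, and more damaging, you discard information by replacing $2g\,q^{k/2}$ with $2g_0\,q^{k/2}$ and $g_0-g$ with $g_0$ \emph{before} combining. With those committed bounds, the triangle inequality and $q^{k/2}\ge 1$ yield at best
\[
2g_0\, q^{k/2}+g_0+2(g_0+3)q^{k/2}\le (5g_0+6)q^{k/2},
\]
and no elementary manipulation recovers the missing $g_0\,q^{k/2}$. The paper keeps the geometric genus $g$ explicit until the end: from $2g\,q^{k/2}+(g_0-g)+2(g_0+3)q^{k/2}$ it absorbs $g_0-g\le (g_0-g)q^{k/2}$ to obtain $(g+3g_0+6)q^{k/2}$, and only then invokes $g\le g_0$. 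The slack you allude to lives in this interplay between the Weil term and the desingularisation term, not in the B\'ezout step.
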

\begin{proof}
    Let $\tilde{\mathcal{C}}$ be a non-singular model for
        $\mathcal{C}$ and let $g$ be its geometric genus.
	We apply the Hasse-Weil Theorem and obtain:
\begin{equation}\label{Hasse-Weil}
	\left| P_k(\tilde{\mathcal{C}})-(q^k+1)\right|\leq 2g\cdot
	q^{\frac{k}{2}}.
\end{equation}
Next, according to Chapter VI in \cite{Ful69},
\begin{equation}\label{desingular}
	|P_k(\tilde{\mathcal{C}})-P_k(\mathcal{C})|\leq g_0-g.
\end{equation}
Every point $(t_0,x_0)$ of $\mathcal{P}_k(\mathcal{C})\backslash
\mathcal{P}_k'(\mathcal{C})$ is determined by the choice of: a) a strict
divisor $d$ of $k$, b) an element $t_0\in \F_{q^{\frac{k}{d}}}$ and c) a root
$x_0$ of $f(t_0,x)$ in $\F_{q^k}$. Therefore:
\begin{equation}\label{raw_lemma}
	\left| P_k'(\mathcal{C})-P_k(\mathcal{C})\right|\leq \sum_{d\mid
	k,1<d\leq k}q^{\frac{k}{d}}\deg_x(f).
\end{equation}
On the one hand $\deg_xf< g_0+3$; on the other hand, if one calls $d_1$ the
smallest proper divisor of $k$, $\sum_{d\mid
k,1<d\leq k}q^{\frac{k}{d}} = q^{\frac{k}{d_1}} \sum_{d\mid
k,1<d\leq k}q^{\frac{k}{d}-\frac{k}{d_1}}$. Since $d\mid k$ and $d\geq d_1$,
$\frac{k}{d}-\frac{k}{d_1}$ is a negative or null integer. Therefore this sum is
bounded by $q^{\frac{k}{d_1}}\sum_{i\geq 0}q^{-i}\leq q^{\frac{k}{d_1}}\sum_{i\geq
0}2^{-i}=2q^{\frac{k}{d_1}}$. But $d_1\geq 2$, so 
\begin{equation}\label{P'}
	\left|P_k'(\mathcal{C})-P_k(\mathcal{C})\right|\leq 2
	q^{\frac{k}{2}}\deg_x f < 2(g_0+3)q^{\frac{k}{2}}.
\end{equation}
The result follows from Equations~\ref{Hasse-Weil}, \ref{desingular} and \ref{P'}.
\end{proof}
The following theorem shows that $\alpha(f)$ is well defined. More
specifically, call $L$
the set of irreducible monic polynomials in $\F_q[t]$.  We show the existence of alpha
by showing the convergence of $v_{b_0}:=\sum_{\ell\in L,\deg
\ell\leq b_0}(\alpha_\ell(f)-\alpha_\ell(x))$.

\begin{theo} \label{alpha_convergence}Let $f\in\F_q[t][x]$ an absolutely
	irreducible separable polynomial.  Then the sequence $v_{b_0}$ defined above converges. If one defines alpha
	by $\alpha(f)=\lim_{b_0\rightarrow\infty} \sum_{\deg \ell\leq b_0, \ell\in L}\alpha_\ell(f)$, then there exist explicit bounds on $\alpha$ that depend only on $\deg_xf$, $\deg_t f$ and $q$. 

\end{theo}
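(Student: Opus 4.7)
The plan is to compare $\alpha_\ell(f)$ termwise with $\alpha_\ell(x)$ and show that the differences decay geometrically once one groups by degree and invokes the point-counting bound from Lemma~\ref{P'P}.

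Since $f$ is separable and non-constant, the product $B$ of $\Disc_x(f)$ and the leading coefficient of $f$ in $x$ is a nonzero element of $\F_q[t]$. For the finitely many \emph{bad} $\ell$ dividing $B$, Proposition~\ref{ca} already assigns a finite value to $\alpha_\ell(f)$, and the total bad contribution is bounded in terms of $\deg_x f$, $\deg_t f$ and $q$ alone. For any \emph{good} $\ell$ (i.e.\ $\ell\nmid B$) of degree $d$, the affine roots of $f$ modulo $\ell$ are simple and the projective roots vanish (since $\ell$ does not divide the leading coefficient of $f$); Corollary~\ref{cor_ell} then applies, and the instance $\alpha_\ell(x) = d/(q^{2d}-1)$ of the same corollary (one affine root, no projective root) yields
\[
\alpha_\ell(f) - \alpha_\ell(x) = \frac{d\, q^d (1 - n_\ell)}{q^{2d} - 1}.
\]

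Grouping by degree and letting $\mathcal{C}$ be the projective plane curve cut out by $f$, the central identification is
\[
\sum_{\deg\ell = d} d\, n_\ell = P_d'(\mathcal{C}),
\]
where each irreducible $\ell$ of degree $d$ indexes a size-$d$ Galois orbit in $\F_{q^d}$ lying outside every strict subfield, and each affine (resp.\ projective) root of $f$ modulo $\ell$ accounts for the $d$ points of $\mathcal{C}(\F_{q^d})$ above that orbit at finite (resp.\ infinite) $x$. Combined with $dI_d = q^d + O(q^{d/2})$, obtained by Möbius inversion of $\sum_{e\mid d} e I_e = q^d$, and with Lemma~\ref{P'P} yielding $P_d'(\mathcal{C}) = q^d + 1 + O\bigl((4g_0+6)q^{d/2}\bigr)$ for $g_0 \leq (\deg_x f + \deg_t f)^2/2$, one concludes $|dI_d - P_d'(\mathcal{C})| = O\bigl((4g_0+6)q^{d/2}\bigr)$. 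Hence the degree-$d$ summand of $v_{b_0}$ is $O\bigl((4g_0+6)\, q^{-d/2}\bigr)$, the tail converges geometrically, and tracking the constants together with the explicit bad-prime contribution produces an explicit bound on $\alpha(f)$ depending only on $\deg_x f$, $\deg_t f$ and $q$.

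The main obstacle I expect is the central identification: one has to carefully map affine roots to affine points and projective roots to points at $x = \infty$, verify that points of the projective closure at $t = \infty$ are correctly absent from $\mathcal{P}_d'(\mathcal{C})$ (they carry no $t$-coordinate in $\F_{q^d}$), and absorb any residual $O(1)$ bookkeeping discrepancies (in particular those coming from the finitely many bad $\ell$ of each degree) into the already finite bad-prime total. Once this geometric dictionary is made precise, the remainder is a routine combination of Möbius inversion, Lemma~\ref{P'P}, and geometric summation.
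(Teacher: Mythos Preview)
Your proposal is correct and follows essentially the same route as the paper: split off the finitely many bad $\ell$ dividing $\Disc_x(f)\cdot f_d$, use Corollary~\ref{cor_ell} on the good $\ell$ to express $\alpha_\ell(f)-\alpha_\ell(x)$ in terms of $n_\ell-1$, group by degree, identify the root count with $P_d'(\mathcal{C})$, and invoke Lemma~\ref{P'P} to get geometric decay. The one cosmetic difference is that the paper estimates $dI_d$ by applying Lemma~\ref{P'P} a second time to $\mathbb{P}^1$, whereas you do it directly by M\"obius inversion; both give the same $q^d+O(q^{d/2})$ and the paper's choice just keeps the two estimates parallel. Your flagged ``obstacle'' about matching affine/projective roots to points of $\mathcal{P}_d'(\mathcal{C})$ and excluding points at $t=\infty$ is exactly the bookkeeping the paper also leaves implicit, and it resolves as you anticipate.
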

\begin{proof}
Call $d_0=\deg f$ the degree of $f$ as a polynomial in two variables and
$g_0=(d_0-1)(d_0-2)/2$. Let $L_0$ be the set of
irreducible divisors of $\Disc(f)\cdot f_d$. Call $b_0$ the largest degree of elements in $L_0$. Let
        $k > b_0$. We are in the case of Corollary~\ref{cor_ell}, hence
        Equation~\ref{cor_ell_eq} gives
\begin{eqnarray*}\label{eq1}
	\sum_{\ell\in L, \deg \ell=k}
	\alpha_\ell(f)-\alpha_\ell(x)=\qquad\qquad\qquad\qquad\qquad\\
\qquad=\frac{kq^k}{q^{2k}-1}\left(\#\{(\ell,r)\mid
	\deg(\ell)=k, f(r)\equiv
0\mod\ell\}-I_k\right).
\end{eqnarray*}
Each pair $(\ell,r)$ as in the equation above corresponds to exactly $k$ points on
the curve $\mathcal{C}$ associated to $f$. Indeed, each $\ell$ has
exactly $k$ distinct roots in $\F_{q^k}$.
Hence we have $I_k=\frac{1}{k}P_k'(\mathbb{P}^1(\F_q))$
and $\#\{(\ell,r)\mid f(r)\equiv
0\mod\ell\}=\frac{1}{k}P_k'(\mathcal{C})$, and further:
\begin{equation}\label{eq2}
\left |\#\{(\ell,r)\mid f(r)\equiv
0\mod\ell\}-I_k\right|=\frac{1}{k}\left|P_k'(\mathcal{C})-P_k'(
\mathbb{P}^1(\F_q)) \right|.
\end{equation}
Finally, Lemma~\ref{P'P} applied to $\mathcal{C}$ and
$\mathbb{P}^1(\F_q))$
respectively gives
\begin{eqnarray}\label{double}
	\left|P_k'(\mathcal{C})-(q^k+1)\right|&\leq &(4g_0+6)\sqrt{q^k}\\
	      \left|P_k'(\mathbb{P}^1(\F_q))-(q^k+1)\right|&\leq
                                                    &6\sqrt{q^k},
\end{eqnarray}
where $g_0$ is the arithmetic genus of $\mathcal{C}$.

Hence
$\left| \#\{(\ell,r)\mid f(r)\equiv
0\mod\ell\}-I_k \right|\frac{kq^k}{q^{2k}-1}\leq
(4g_0+12)\frac{q^k}{q^{2k}-1}\sqrt{q^k}$. The series $\sum_{k\geq
1}\frac{q^k}{q^{2k}-1}\sqrt{q^k}$ is equivalent to the series $\sum_k
\sqrt{q}^{-k}$ which converges. Therefore the sequence
$v_{b_0}$ converges when $b_0$ tends to infinity.

For a given pair $d_0$ and $q$, one can clearly bound the set $L_0$.
For all $\ell\in L_0$, by Proposition~\ref{ca}, $S(f,\ell)$ is formed by at most
$\deg_xf\leq d_0$ infinite sequences and a finite number of additional
elements. We are thus left with finding a bound for the roots which do not
extend into $\ell$-adic roots. By Hensel's
Lemma, if a root $(r,k)$ does not lift to an $\ell$-adic one, we have
$f'(r)\equiv 0\mod \ell^k$. This implies $\Disc(f)\equiv 0\mod \ell^k$ which
gives a bound on $k$. Therefore, alpha admits an effective bound depending exclusively on $q$ and $d_0$. 
\end{proof}
\begin{ex}
     Following the proof of the previous theorem, we can not only find a
     bound on $\alpha$, but also evaluate the speed of convergence.
	Take $q=2$, and let $f\in \F_q[t][x]$ such that $\deg_x f=6$
	and $\tilde{g}=19$ and suppose that $L_0$ contains only polynomials of degree
	less than $15$. Using Equations~\ref{eq2} and \ref{double} in the
	proof above and the exact formula for $I_k$ we can prove that
	$\alpha(f)$ is computed up to an error of $0.567$ if we sum
	polynomials $\ell$ up to degree $15$ and we reduce the error to
	$0.097$ if we go to degree $20$. 
\end{ex}

\subsection{Cancelation property-Laurent roots}
Consider the polynomial $f=x^3+t^2x+1\in\F_2[t][x]$. For all $(a,b)\in
\F_2[t]^2$, if no cancellation occur, the degree of $F(a,b)=a^3+t^2ab^2+b^3$ is $\max(\deg
a^3,\deg(t^2ab^2),\deg b^3)$. One can easily check that the degree of
$F(a,b)$ is lower than this value if and only if $\deg a-\deg b$
equals $1$ or $-2$. Moreover, in the first case the decrease is at least $2$
while in the second case it is at least $1$. These conditions can be better
explained thanks to the Laurent series.

We call Laurent series (in $1/t$) over $\F_q$ any series 
$\sum_{n\geq n_0} a_n \frac{1}{t^n}$ with $n_0\in\Z$ and coefficients $a_n$
in $\F_q$. We make the common convention to call degree of a rational fraction $f_1/f_2$ with
$f_1,f_2\in\F_q[t]$ the difference $\deg f_1-\deg f_2$. The degree of a 
Laurent series is then defined as the degree of any of its nonzero truncations e.g
$t+1+1/t^2+\cdots$ has degree $1$. Equivalently, it is the opposite of the
valuation of the Laurent series in $1/t$. We call Laurent polynomial a pair $(r,m)$
such that $r\in \F_q(t)$, $m$ is an integer and $r$ is the sum of a Laurent series whose terms
are null starting from index $m+1$. We may also use
$r+O(\frac{1}{t^{m+1}})$ for
writing the Laurent polynomial $(r,m)$. 

Formally, a pair $(a,b)$ has a ``decrease in degree'' if $\max_i \deg( f_i
a^i b^{d-i})$ is strictly larger than $\deg F(a,b)$. Note that $F(a,b)$
has a decrease in degree if and only if $b\neq 0$ and the first terms of the Laurent series
$\frac{a}{b}$ match those of a Laurent polynomial $r$ with the property
given in the following definition.

\begin{deff}\label{def-Laurent}
	Let $f\in\F_q[t][x]$ be a polynomial and call $d$ its degree in $x$.
	Let $(r,m)$ be a Laurent polynomial. We say that $(r,m)$ is a Laurent root of $f$ if \begin{equation}\label{r}
	\max_{i\in[0,d]}\deg(f_ir^i)-\deg f(r) > 0.
	\end{equation}
	We call gap of $(r,m)$ the least value in the left hand side of the
	inequality above when we
	replace $r$ by any Laurent series extending $r$. A Laurent series such that
all its truncations are Laurent roots is called an infinite Laurent root. 
\end{deff}
In the example above, $\frac{1}{t^2}+\frac{1}{t^8}+O(\frac{1}{t^9})$ is a
Laurent roots of gap $7$ and it extends into an infinite Laurent root. Also
$t+O(1)$ is a Laurent root of gap $2$ that is not the truncation of any
Laurent root with a larger $m$. It also shows that the gap is not directly
connected to the number of terms in the Laurent polynomial.

\subsubsection{Computation} 
One can compute every Laurent root in two steps. First, one computes the Laurent
roots of type $\lambda t^\delta$ with $\lambda$ is in $\F_q$ and $\delta$
is an integer. For this, call
Newton polygon of $f$, with respect to valuation $-\deg$, the convex hull of
$\{(d-i,\deg(f_{i}))\mid i\in[0,d]\}$. Chapter II in \cite{Neu99} shows that
$\delta$ must be an integer slope of the Newton
polygon of $f$. Next, to extend a Laurent root 
$r=a_{n_0}t^{n_0}+\cdots+a_m\frac{1}{t^m}$ with $a_m\neq 0$ to a root
with precision larger than $m$, one computes the Laurent
roots $\lambda t^\delta$ of $f(x+r)$ for which $\delta$ is an integer such that
$\delta < -m$. Note that this corresponds to make a Hensel lift with respect to
the valuation $-\deg$.

In order to compute the gap of a Laurent root $(r,m)$, note that in Equation~\ref{r} the term $\max_{i\in[0,d]}\deg(f_ir^i)$ depends
only on the leading term of $r$. Hence the problem is reduced to that of
computing the maximal degree of
$f(R)$ for the Laurent series $R$ which extend $r$.  For this, one sets an
upper bound and then tests Laurent
polynomials with increasingly more terms and reduces the upper bound
until they produce a certificate. 

\subsubsection{Definition of $\alpha_\infty$.}\label{sssec:alpha_infty}
For each Laurent root $(r,m)$, we can compute the proportion of pairs $(a,b)$ on a
sieve domain such that the first terms of $a/b$ match $(r,m)$. 
 Recall that a sieve domain of sieve parameter $e$ and skewness $s$ corresponds
 to all the pairs $(a,b)$ with $\deg a\leq\lfloor e+s/2\rfloor$ and $\deg b\leq
 \lfloor e-s/2 \rfloor$. 
 
 \begin{lem}\label{prob_r}
	Let $r+O(\frac{1}{t^{m+1}})$ be a Laurent root of $f\in\F_q[t][x]$. 
 Call $N_r=\deg(r)+m$, the number of terms of $r$ other than
the leading one.
Then the proportion of pairs $(a,b)$ on a domain of sieve parameter $e\geq
N_r+|\deg r|$ and
skewness parameter $s$ such that the Laurent series $a/b$
matches $r+O(\frac{1}{t^{m+1}})$ is:
\begin{equation*}
	\Prob
        \left(\frac{a}{b}=r+O\left(\frac{1}{t^{m+1}}\right)\right)
        =\frac {q^{-N_r-|s-\deg r|}} {q+1}
        (1+O_{e\rightarrow\infty}(1/q^{2e})).
\end{equation*}
\end{lem}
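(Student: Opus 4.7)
The plan is to translate the matching condition $a/b = r + O(1/t^{m+1})$ into a degree inequality on polynomials, then count pairs $(a,b)$ satisfying it by fixing $b$ first. Concretely, since both sides are Laurent series at infinity, the condition is equivalent to $\deg(a - br) \leq \deg b - m - 1$. Because $r$ is a Laurent polynomial of degree $\delta := \deg r$ and precision $m$, the product $br$ is a Laurent polynomial with lowest degree term $t^{-m}$ and highest degree term $t^{d_b + \delta}$ (where $d_b = \deg b$), so the inequality imposes prescribed values on the coefficients of $a$ at positions $d_b - m \leq k \leq d_b + \delta$, while leaving the coefficients $a_k$ at positions $0 \leq k < d_b - m$ free.

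Next, I would fix $b$ of degree exactly $d_b$ and count admissible $a$ with $\deg a \leq D_a := \lfloor e + s/2 \rfloor$. In the regime $d_b \geq m$ and $d_b + \delta \leq D_a$ (which is guaranteed for the main contribution by the hypothesis $e \geq N_r + |\deg r|$), the count is exactly $q^{d_b - m}$: the $N_r + 1$ coefficients of $a$ at positions $d_b - m$ to $d_b + \delta$ are forced to equal the corresponding coefficients of $br$, the coefficients strictly between $d_b + \delta$ and $D_a$ are forced to vanish, and the $d_b - m$ lowest coefficients are unconstrained. The boundary regime $d_b < m$ requires the vanishing of negative coefficients of $br$, a system whose solution set is so small that its contribution is swallowed by the announced error.

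Summing over $b$ of each degree using that there are $(q-1)q^{d_b}$ polynomials of degree exactly $d_b$, the number of valid pairs is
\begin{equation*}
\sum_{d_b = m}^{D^*} (q-1)q^{d_b}\cdot q^{d_b-m}
= \frac{q^{-m}(q^{2D^*+2} - q^{2m})}{q+1},
\end{equation*}
with $D^* = \min(D_b, D_a - \delta)$ and $D_b = \lfloor e - s/2 \rfloor$. Dividing by the total number of pairs $q^{D_a + D_b + 2}$ and using $\min(A,B) = \tfrac{1}{2}(A + B - |A-B|)$ yields
\begin{equation*}
2D^* - D_a - D_b = -\delta - |s - \delta|,
\end{equation*}
once we use $D_a - D_b = s$ up to the floor adjustment. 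Substituting $N_r = m + \delta$ produces the main term $q^{-N_r - |s - \deg r|}/(q+1)$, and the discarded lower-order part $q^m$ of the geometric sum together with the boundary and floor corrections give the relative error $O(q^{-2e})$.

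The main subtlety I expect is organising the case analysis cleanly: the identity $2\min(D_b, D_a - \delta) = D_a + D_b - \delta - |s - \delta|$ has to be verified both when $\delta > s$ (the bound $D_a - \delta$ is binding) and when $\delta \leq s$ (the bound $D_b$ is binding), and the hypothesis $e \geq N_r + |\deg r|$ is precisely what ensures $D^* \geq m$ in each case, so the geometric series has a nonempty and asymptotically large index range. The rest of the argument is bookkeeping of linear constraints on coefficient vectors.
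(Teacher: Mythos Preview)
Your proposal is correct and follows essentially the same approach as the paper: both arguments rest on the observation that, for a fixed $b$ of degree $d_b$, the matching condition $a/b=r+O(1/t^{m+1})$ imposes $N_r+1$ independent linear constraints on the high coefficients of $a$, and the remaining count is a geometric sum over the admissible degrees. The only difference is organizational: the paper factors the probability as (proportion with $\deg a-\deg b=\deg r$)$\times$(leading-coefficient match)$\times$($N_r$ further triangular equations), whereas you fix $b$ and sum $q^{d_b-m}$ directly, then extract the $|s-\deg r|$ exponent via the identity $2\min(D_b,D_a-\delta)=D_a+D_b-\delta-|s-\delta|$; both routes produce the same main term and the same $O(q^{-2e})$ relative error.
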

\begin{proof}
 The proportion of pairs $(a,b)$ such that $\deg a-\deg b=\deg r$ is
 approximated by $(\frac{q-1}{q})^2\cdot$ $\cdot\sum_{i\geq
0}\frac{1}{q^{2i}} q^{-|s-\deg r|}=$$\frac{q-1}{q+1}q^{-|s-\deg r|}$. See
Figure~\ref{skewness} for an illustration. The relative error made is
$O(1/q^{2e})$ and corresponds to the fact that the series
$\sum\frac{1}{q^{2i}}$ must be truncated at $i= e-|\deg r|$ and to the fact
that for $\deg a,\deg b< |s|$ there is no pair $(a,b)$ such that $\deg
a-\deg b=s$. Next, only a fraction $\frac{1}{q-1}$
of these pairs have leading coefficients such that $a/b=r(1+O(\frac{1}{t}))$. Finally, when
$a/b=r(1+O(\frac{1}{t}))$, the condition
$a/b=r+O(\frac{1}{t^{m+1}})=r(1+O(\frac{1}{t^{N_r+1}}))$ can be expressed as a
system of $N_r$ linear equations, that is triangular on the variables of $a$.
Therefore, a pair with $\deg a-\deg b=\deg r$ and $a/b=r(1+O(1/t))$ has
probability $q^{-N_r}$ to satisfy $a/b=r(1+O(\frac{1}{t^{N_r+1}}))$. This
leads to the proportion announced in the statement. Note that the condition
$e> N_r+|\deg r|$ guarantees that the polynomials $a$ and $b$ have
sufficiently many coefficients so that the linear conditions make sense.
\end{proof}
We can now define a function which, for large sieve domains, measures the
average degree gained due to the cancellations.

\begin{deff}
    For any Laurent root $r+O(\frac{1}{t^{m+1}})$ we call
    $\mathrm{trunc}(r)$ the
	Laurent polynomial obtained by deleting the term in $\frac{1}{t^m}$.
        If $\mathrm{trunc}(r)\neq 0$ we write $\gamma(r,m)$ for the gap of
        $r+O(\frac{1}{t^{m+1}})$ minus the gap of $\mathrm{trunc}(r)+O(\frac{1}{t^m})$.
	Otherwise $\gamma(r,m)$ is the gap of $r+O(\frac{1}{t^{m+1}})$. We
	call alpha infinity the following quantity
	\begin{equation}\label{alpha_infty}
		\alpha_\infty(f,s):=\sum_{(r,m)\text{ Laurent
		root}} -\gamma(r,m)\frac{1}{q+1} q^{-N_r-|s-\deg(r)|}.
\end{equation}
\end{deff}

Consider the case when all the Laurent roots of $f$ extend infinitely into the
Laurent series $r_1$, $r_2$, $\ldots$, $r_h$. If, for all $i$, each new term
of $a/b$ which matches $r_i$ increases the gap by one, then we obtain the simpler formula below.

	\begin{equation}
		\alpha_\infty(f,s)=\frac{-q}{q^2-1}\sum_{i=1}^h q^{-|s-\deg r_i|}.
	\end{equation}
As a particular case, it is clear that the polynomials of degree $1$
 in $x$ have exactly one infinite Laurent root. If one sets the skewness
 $s=\deg_t f$, then $\alpha_\infty(f,s)=-\frac{q}{q^2-1}$. As a second
 example, a degree-$6$ polynomial $f$ over $\F_2$ which has infinite Laurent
 roots of degrees $3,2,1,0,-1$ and $-2$ has
 $\alpha_\infty(f,0)=-1.75$. 
 \begin{ex} 
	 A special class of polynomials are those corresponding to
	 $\mathcal{C}_{\text{a},\text{b}}$ curves, which were proposed
         for the FFS in~\cite{Mat99}. If $f$ is a
	 $\mathcal{C}_{\text{a},\text{b}}$ polynomial and if we denote
	 $\text{a}=\deg_x f$ and
	 $\text{b}=\deg_t f_0-\deg f_\text{a}$, then 
	 for all $i\in[0,\text{a}-1]$ we have $\deg f_i<
	 \deg f_\text{a} +(\text{a}-i)\frac{\text{b}}{\text{a}}$. 
	 
	 Suppose that a $\mathcal{C}_{\text{a},\text{b}}$ polynomial $f$ had a Laurent
	 root $r$. If $\deg r<\frac{\text{b}}{\text{a}}$, then $\max \{\deg f_i r^i\mid
	 i\in[1,\text{a}]\}< \deg f_0$. If
	 $\deg r\geq \frac{\text{a}}{\text{b}}$ then $\deg
	 f_\text{a} r^\text{a}$ dominates all the
	 other terms of $f(r)$, so $f$ has no Laurent roots. Hence, for any
	 $s$, $\alpha_\infty(f,s)=0$. 
\end{ex}
\begin{figure}[ht!]
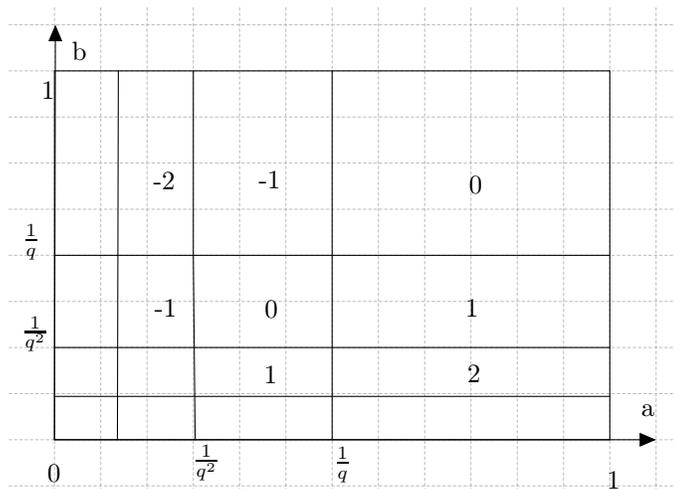

\caption{A domain of pairs $(a,b)$ in lexicographical order having skewness
$S$. We write the quantity $\deg a -\deg b -S$ for each region.}
\label{skewness}
\begin{center}

\usetikzlibrary{arrows}
\baselineskip=10pt
\hsize=6.3truein
\vsize=8.7truein
\definecolor{cqcqcq}{rgb}{0.75,0.75,0.75}
\tikzpicture[line cap=round,line join=round,>=triangle 45,x=0.6152365166685244cm,y=0.613426502285878cm]
\draw [color=cqcqcq,dash pattern=on 1pt off 1pt, xstep=0.6152365166685244cm,ystep=0.613426502285878cm] (-0.98,-5.23) grid (13.64,5.37);
\clip(-0.98,-5.23) rectangle (13.64,5.37);
\draw (0,4)-- (12,4);
\draw (12,4)-- (12,-4);
\draw (12,-4)-- (0,-4);
\draw (0,-4)-- (0,4);
\draw (0,0)-- (12,0);
\draw (6,4)-- (6,-4);
\draw (3,4)-- (3,0);
\draw (12,-2)-- (6,-2);
\draw (8.76,1.91) node[anchor=north west] {0};
\draw (4.21,2.02) node[anchor=north west] {-1};
\draw (8.68,-0.76) node[anchor=north west] {1};
\draw (3,0)-- (3.04,-4);
\draw (1.93,2) node[anchor=north west] {-2};
\draw (8.72,-2.19) node[anchor=north west] {2};
\draw (4.34,-0.79) node[anchor=north west] {0};
\draw (1.38,4)-- (1.36,-4);
\draw (1.96,-0.76) node[anchor=north west] {-1};
\draw (4.33,-2.2) node[anchor=north west] {1};
\draw [->] (0,-4) -- (0.02,5);
\draw (12.48,-3.03) node[anchor=north west] {a};
\draw (0.18,4.83) node[anchor=north west] {b};
\draw (6,-2)-- (0,-2);
\draw (12,-3.06)-- (0,-3.06);
\draw (11.74,-4.48) node[anchor=north west] {1};
\draw (5.84,-3.95) node[anchor=north west] {$\frac{1}{q} $};
\draw (-0.89,0.87) node[anchor=north west] {$ \frac{1}{q} $};
\draw (-0.48,3.97) node[anchor=north west] {1};
\draw (2.79,-3.92) node[anchor=north west] {$ \frac{1}{q^2}$};
\draw (-0.91,-1.12) node[anchor=north west] {$\frac{1}{q^2} $};
\draw (-0.35,-4.35) node[anchor=north west] {0};
\draw [->] (0,-4) -- (13,-4);
\endtikzpicture
\end{center}
\end{figure}
\subsubsection{Constructing polynomials with many Laurent roots}\label{many}
One can easily check that, if a polynomial $f=\sum_i f_ix^i$ satisfies $\deg(f_d)=0$, $\deg(f_{d-1})=s$ for
some $s>0$ and $\deg(f_i)< (d-i)s$ for all
$i\in[0,d-2]$, then $f$ has a Laurent root of degree $s$. This can be
generalized to up to $d$ roots.

  For any edge
 $(v_i,i)\leftrightarrow(v_j,j)$ of the Newton polygon we call length the
 quantity $|i-j|$.   
 \begin{prop}(Section II.6,\cite{Neu99})\label{cor_slope} If $f\in\F_q[t][x]$ is
	 a polynomial, each edge of length $1$ in the Newton polygon
	 corresponds to an infinite Laurent root for $f$. 
\end{prop}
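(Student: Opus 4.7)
The plan is to reduce the statement to the classical Newton polygon theorem over the complete discretely valued field $K := \F_q((1/t))$ (the completion of $\F_q(t)$ at infinity, with valuation $v = -\deg$ and residue field $\F_q$), followed by a Galois descent argument. First I would set up notation: the length-$1$ edge connects the lattice points $(d-i, v_i)$ and $(d-i-1, v_{i+1})$ with $v_j := \deg f_j$, so its slope $\delta = v_i - v_{i+1}$ is automatically an integer, and $M := v_i + i\delta = v_{i+1} + (i+1)\delta$ is the common maximum of $v_j + j\delta$ over $j$, reached exactly at $j \in \{i, i+1\}$ by the convex hull property.

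Next I would invoke the Newton polygon theorem (Section~II.6 of \cite{Neu99}) applied to $f$ viewed in $K[x]$: the edge of slope $\delta$ and length $1$ produces exactly one root $R \in \overline{K}$ with $v(R) = -\delta$, and this root is simple. Since $v$ extends uniquely to $\overline{K}$, the Galois group $\Gal(\overline{K}/K)$ acts as isometries and permutes roots of $f$ sharing a given valuation. By uniqueness, $R$ must be Galois-fixed, so $R \in K = \F_q((1/t))$, and hence $R$ is a genuine Laurent series over $\F_q$ whose leading term is $\lambda t^\delta$ for the unique $\lambda \in \F_q^*$ solving the edge polynomial $\mathrm{lc}(f_i) + \mathrm{lc}(f_{i+1})X = 0$ (linear precisely because the edge has length $1$).

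It remains to verify that every truncation $r$ of $R$ at precision $m \geq -\delta$ is a Laurent root. Writing $\epsilon := R - r$, one has $\deg \epsilon \leq -(m+1)$, and since $r$ inherits the leading term $\lambda t^\delta$ from $R$, one gets $\max_j \deg(f_j r^j) = M$. Taylor expanding around $R$ and using $f(R) = 0$,
$$f(r) = \sum_{k \geq 1} \frac{f^{(k)}(R)}{k!}(-\epsilon)^k,$$
together with the Newton polygon bound $\deg f^{(k)}(R) \leq M - k\delta$, would yield $\deg(f^{(k)}(R)(-\epsilon)^k/k!) \leq M - k(m+1+\delta)$. Since $m+1+\delta \geq 1$, the $k=1$ term dominates and one obtains $\deg f(r) \leq M - (m+1+\delta) < M$, so $(r,m)$ is a Laurent root with positive gap. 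Hence $R$ is an infinite Laurent root, as claimed.

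The main obstacle is the Galois descent: the Newton polygon theorem a priori only places roots in $\overline K$, and it is the length-$1$ hypothesis combined with the uniqueness of the extension of $v$ that forces the root of valuation $-\delta$ to lie in $K$ itself, yielding an honest Laurent series over $\F_q$ rather than over some proper extension.
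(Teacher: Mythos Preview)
The paper does not prove this proposition; it simply cites Neukirch~II.6. Your strategy---Newton polygon over the complete field $K = \F_q((1/t))$, then descent, then checking that every truncation is a Laurent root---is the right one, but two steps are not quite correct in positive characteristic, though both are easily repaired.

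First, the Galois descent does not work as written. You use that an element of $\overline{K}$ fixed by $\Gal(\overline{K}/K)$ lies in $K$; in characteristic $p$ the extension $\overline{K}/K$ is not Galois, and the fixed field of $\Aut(\overline{K}/K)$ is the perfect closure $K^{1/p^\infty} \supsetneq K$. The fix uses the simplicity you already noted: a simple root of $f$ is separable over $K$ (write $f = mg$ with $m$ the minimal polynomial of $R$; then $0 \neq f'(R) = m'(R)g(R)$ forces $m'(R) \neq 0$), so $R \in K^{\mathrm{sep}}$ and the descent goes through with $\Gal(K^{\mathrm{sep}}/K)$. Closer to the cited reference, the \emph{factorization} form of the Newton polygon theorem (Neukirch~II.6.3--6.4) already splits $f$ over $K$ with one factor per slope, and an edge of length $1$ contributes a linear factor over $K$---so $R \in K$ directly, with no descent needed.

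Second, the Taylor expansion $f(r) = \sum_{k \ge 1} \dfrac{f^{(k)}(R)}{k!}(-\epsilon)^k$ is ill-defined in characteristic $p$ once $k \ge p$. Replace $f^{(k)}/k!$ by the Hasse derivative $D_k f = \sum_{j \ge k} \binom{j}{k} f_j x^{j-k}$; the identity $f(R - \epsilon) = \sum_{k \ge 0} D_k f(R)(-\epsilon)^k$ holds over any ring, and your bound $\deg D_k f(R) \le M - k\delta$ (from $\deg f_j + j\delta \le M$ for all $j$) carries over verbatim, giving $\deg f(r) \le M - (m+1+\delta) < M$ as claimed.
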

For example, the polynomial $f=x^7+t^3x^6+t^5x^5+(t^6+1)x^4+t^6x^3+(t^5+t+1)x^2+t^3x+1$
has a Newton polygon with $7$ edges of length $1$ and therefore $7$ infinite Laurent roots.

Note however that the converse of the Proposition~\ref{cor_slope}
is false in general: a polynomial might have infinite Laurent roots which cannot be
counted using the Newton polygon, e.g. for the polynomial $f$ above,
$f^{1}:=f(x+t^4)$ also has $7$ infinite Laurent roots, although its Newton
polygon has no edge of length $1$.


\section{The combined effect of the properties which influence the sieving
efficiency}\label{sec:combining}

The previous sections identified three elements which affect the sieve and
defined associated measures: $\sigma$ for the size
property, $\alpha$ for the root property and $\alpha_\infty$ for the
cancellation property. The list might be extended with other properties and the
quantifying functions can be combined in different fashions in order to compare
arbitrary polynomials. In this section we define two general purpose
functions based on the three properties above and show their relevance through
experimentation. 

\subsection{Adapting Murphy's $\mathbb{E}$ to the FFS}

As a first function which compares arbitrary polynomials, we adapt Murphy's
$\mathbb{E}$, already used for the NFS algorithm (Equation $5.7$, \cite{Mur99}). Heuristically, $\mathbb{E}$
uses Dickman's $\rho$ to approximate the
number of relations found by $f$ and $g$ on a sieving domain.
\begin{deff}
Let $f,g\in \F_q[t][x]$ be two irreducible polynomials, $s$ an integer
called skewness parameter, $e$ a half integer called sieve parameter and $\beta$
an integer called smoothness bound. Let $D(s,e)$ be the set of coprime pairs
 $(a,b)\in \F_q[t]^2$ such that $0\leq \deg(a)\leq \lfloor
 e+\frac{s}{2}\rfloor$ and $0\leq
\deg(b)\leq \lfloor e-\frac{s}{2} \rfloor$. We define:
\begin{equation*}
	\mathbb{E}(f,g,s,e,\beta)=\sum_{(a,b)\in D(s,e)}
	\rho\left(\frac{\deg F(a,b)+\alpha(f)}{\beta}\right)\cdot
	\rho\left(\frac{\deg G(a,b)+\alpha(g)}{\beta}\right).
\end{equation*}

\end{deff}

Unlike the situation in the NFS case, where $\mathbb{E}$ must be approximated by
numerical methods, in the case of the FFS one can compute $\mathbb{E}$ in
polynomial time with respect to $\deg(f)$, $\deg(g)$ and $e+|s|$. Note that $\rho$ can
be evaluated in polynomial time to any precision on the interval which is
relevant in this formula. 

We recall that we focus in this work on the case where the
polynomial $g$ is linear, as in~\cite{JoLe02}. More precisely, we assume
that $g=g_1x+g_0$ is chosen with $g_0$ of much higher
degree than $g_1$. In this case, the algorithm goes as follows:
\begin{enumerate}
	\item[1.] compute the Laurent roots of $f$ up to
		$\lfloor d(e+\frac{s}{2})+\deg_t f\rfloor$ terms;
	\item[2.] for each $d_a\leq \lfloor e+\frac{s}{2}\rfloor$,
		$d_b\leq \lfloor e-\frac{s}{2} \rfloor$ and $i\in\N$, use
		Lemma~\ref{prob_r} to compute the number $n(d_a,d_b,i)$ of pairs
$(a,b)$ such that $\deg(a)=d_a$, $\deg(b)=d_b$ and $\deg(F(a,b))=i$;
\item[3.] compute 
	\begin{equation}
	 \sum_{d_a,d_b,i}n(d_a,d_b,i)\rho\left(\frac{i+\alpha(f)}{\beta}\right)
	 \cdot\rho\left(\frac{d_b+\deg
	 g_0+\alpha(g)}{\beta}\right).
	\end{equation}
\end{enumerate}

One can use Murphy's $\mathbb{E}$ to choose the optimal skewness
corresponding to a pair $(f,g)$ of polynomials. 
\begin{ex}
Consider for instance the two
polynomials used for the computation of the discrete logarithm in
$GF(2^{619})$: $f=x^6+(t^2+t+1)x^5+(t^2+t)x$+0x152a and
$g=x-t^{104}-0\text{x6dbb}$ written in hexadecimal\footnote{Each polynomial
	$\ell$ of $\F_2[t]$, $\ell=\sum_i \ell_i t^i$ with
	$\ell_i\in\{0,1\}$, is represented by base-$16$ notation of the integer $\sum_i
\ell_i 2^i$.} notation \cite{BBDGJTVZ12}. They used the smoothness bound $22$ and most of the
computations were done using special-Q's $(q,r)$ with $\deg q=25$. The pairs $(a,b)$
considered for each special-Q were $(ia_0+j a_1,ib_0+jb_1)$ with $(a_0,b_0)$
and $(a_1,b_1)$ two pairs on the special-Q lattice and $i,j$ were
polynomials of degree at most $12$. Hence, the pairs $(a,b)$ considered were
such that $\deg a+\deg b= \deg q+24$, so $e=(\deg a+\deg b)/2=24.5$. Note that, if $a$ and
$b$ have maximal degree on our set, the difference $\deg a-\deg b$ cannot be
even. Table~\ref{choose_skewness} shows that the best skewness value~is~$3$.
\end{ex}

Note though that in \cite{BBDGJTVZ12} one started by experimentally choosing
the best skewness for polynomials of a given bound on the degrees. Then they
selected polynomials which, for a given value of $s$, minimize the value of 
epsilon, the function that we define below.

\begin{table}
	\caption{Choosing the best skewness using $\mathbb{E}$. The
	parameters are set to $e=24.5$ and $\beta=22$.}
	\begin{center}
	\begin{tabular}{|c|ccccc|}
		\hline
		$s$ & $-1$ & $1$ & $3$ & $5$ & $7$\\
		\hline
		$10^{-5}E(f,g,s,e,\beta) $ & $2.54$ & $3.31$ & $3.46$
						 & $2.88$ & $2.12$ \\
		\hline
	\end{tabular}
\end{center}
\label{choose_skewness}
\end{table}

\paragraph{An alternative to $\mathbb{E}$: Epsilon}
Recall that $\sigma$ is the degree of the norm when no cancellation occurs,
$\alpha$ is the degree gained due to the modular roots and
$\alpha_\infty$ is the degree gained thanks to cancellations. It seems
natural that their sum is the degree of a polynomial which has the same
skewness probability as $F(a,b)$ for an ``average'' pair $(a,b)$ on the
sieving domain.
\begin{deff} For a polynomial $f\in\F_q[t][x]$, a skewness
	parameter $s$ and a sieve parameter $e$, we call epsilon the
	following average degree
\begin{equation*} 
\epsilon(f,s,e)=\alpha(f)+\alpha_\infty(f,s)+\sigma(f,s,e).
\end{equation*}
\end{deff}

Epsilon can be used to estimate the speedup of a polynomial with good
properties. For example if the smoothness bound is $28$ and two polynomials
have the value of epsilon equal to $107$ and $109$ respectively, then we expect
a speedup of $\rho(107/28)/\rho(109/28)\approx 1.19$.

\paragraph{Comparing $\epsilon$ and $\mathbb{E}$}
Since the subroutines necessary in the computation of $\epsilon$ are equally
used when evaluating $\mathbb{E}$, in practice epsilon is faster to compute than~$\mathbb{E}$. The advantage of $\mathbb{E}$ is that it is more
precise, but the experiments of the next section will show that epsilon is
reliable enough.

\subsection{Experimental validation}


The implementation used in the experiments is the one described
in~\cite{DeGaVi12}, which is freely available at~\cite{cadonfs}. 
To our knowledge, no other press-a-button 
implementation of the FFS is publicly available. In addition, this
implementation does relatively few modifications which could loose relations, making a theoretical study
inexact. 

The real-life efficiency of a polynomial is measured either by the
number of relations per second, by the total number of relations, or as the
average number of relations per special-Q. We kept the last one as a
measure of efficiency since the software 
offers an option to reliably measure it and because it considers only the
polynomial properties rather than the implementation quality.

\begin{figure}
	\begin{center}
	\includegraphics[width=10cm]{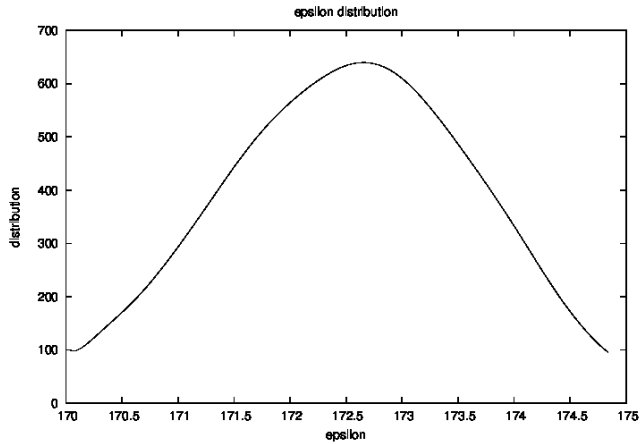}
	\end{center}
	\caption{Distribution of epsilon on a sample of $20000$ polynomials of
	$\F_2[t][x]$ of degree $6$ in $x$ and $12$ in $t$.}
	\label{fig:epsilon}
\end{figure}

\begin{experiment}
	\label{valid}

	We selected a sample of polynomials $f$ after evaluating epsilon for a
	range of polynomials considered one after another in lexicographical
	order starting from $x^6+(t^2+t+1)x^5+(t^3+t^2+t+1)x^3+tx+t^{11}$.
	Note that the choice of the starting point guarantees that the
	polynomials considered have at least one infinite Laurent root.
	Since the distribution of epsilon was that of Figure~\ref{fig:epsilon}, most of the polynomials tested had values
	of epsilon in a narrow interval. This lead us to
	select only one polynomial in each interval of length $0.01$, to a
	total of $119$ polynomials. Next we extended the sample with $60$
        polynomials starting from $x^6 + t^3x^5 + (t^5 + 1)x^4 + t^6x^3 +
	t^6 + 1$. For each polynomial $f$ we associated a random monic linear
	polynomial $g$ suited to the FFS, having degree in $t$ equal to
	$104$. Indeed, as shown in Theorem~\ref{alpha_x} and in section~\ref{sssec:alpha_infty} respectively, linear polynomial have the same values of
	$\alpha$ and $\alpha_\infty$ respectively.

	We set the parameters as follows: 
        $\text{I}=\text{J}=12$,
	$\text{fbb}0=\text{fbb}1=22$,
$\text{lpb}0=\text{lpb}1=28$, $\text{thresh}0=\text{thresh}1=100$,
$\text{sqside}=1$. The polynomials $q$ used in the special-Q technique
were the first irreducible ones starting from $t^{25}$. We called the
option ``reliablenrels'' which tests as many values of $q$ as needed in order to
obtain a measurement error of $\pm3\%$ with a confidence level of $95\%$. The
skewness parameter was set to $S=3$ because, for the finite fields where the
degree-$6$ polynomials are optimal, this is a sensible choice. Finally,
the parameter sqt was set to $1$ so that, for most special-Q's, the
sieving domain was such that $\deg a\leq 26$ and $\deg b\leq 23$.

\begin{figure} 	
\centering
	\caption{Epsilon and sieve efficiency for the
	polynomials $f$ in Experiment~\ref{valid}. The function $h$ is a
function of type $a+bx+c\log x$, with no special significance. }
\includegraphics[width=12cm]{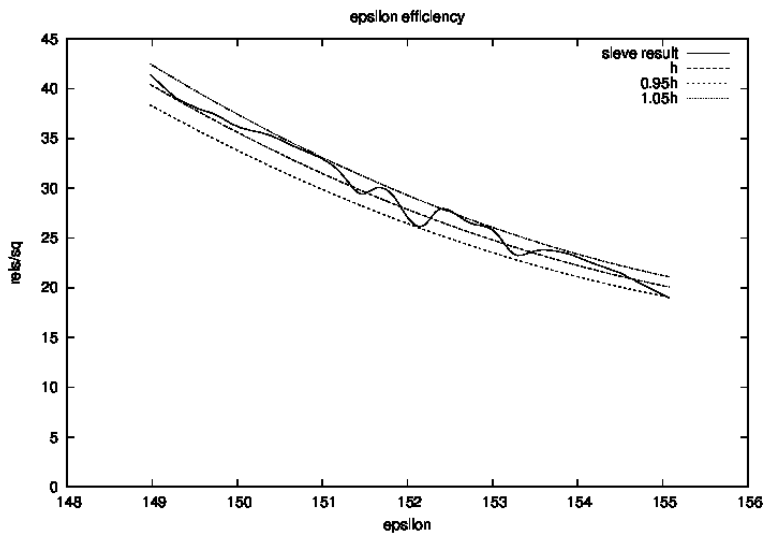}
\label{fig-validation}
\end{figure}

\end{experiment}

The results plotted in Figure~\ref{fig-validation} indicate that the
sieve efficiency is not far from a strictly decreasing function of
epsilon. To illustrates this, we plotted a decreasing function that
fits our results, such that the relative error of our measurements is
always less than $5\%$.

Finally, one can see that a sensible choice of the polynomial can save a
factor $2$ in the sieve time when compared to a bad choice.

\subsection{Correlation between $f$ and $g$}
\label{sec:correlation}
A standard heuristic states that the probabilities of $F(a,b)$ and $G(a,b)$
to be smooth are independent, e.g. Murphy's $\mathbb{E}$ multiplies the two
probabilities. The inexactness of this approximation could be called
correlation property. According to Experiment~\ref{valid}, the correlation property
has a small effect on the sieve, so that we bound ourselves to illustrate it
by an example and a practical experiment. 

\begin{ex}\label{correlation} Let $f=x^2-t^2(t+1)$, $g_1=x-(t^2+t)^7$ and
$g_2=x-(t^2+t+1)^7$. Let $F$,
$G_1$ and $G_2$ be the homogenizations of $f$, $g_1$ and $g_2$
respectively. Note that, for coprime pairs $(a,b)\in \F_q[t]$, $F(a,b)$ is divisible
by $t$ if and only if $a\equiv 0\mod t$. For these pairs 
we have $G_1(a,b)\equiv 0\mod t$ whereas $G_2(a,b)\not\equiv 0\mod t$.
In short, $g_1$ increases the number of doubly smooth pairs whereas $g_2$
that of pairs which are smooth on the rational or the algebraic side, but not on both. 
\end{ex}
If $\deg_xg=1$, then every prime
power $\ell^k$ of $\Res(f,g)$ implies a correlation between the events
$\ell^k \mid F(a,b)$ and $\ell^k \mid G(a,b)$ on a domain of pairs $(a,b)$. 
 Table~\ref{fg} summarizes an
experiment in which we compared different pairs $(f,g)$ with $f$ having
similar values of $\epsilon$. We selected three polynomials $f$ of the form
$f=\tilde{f}+f_0$ with $\tilde{f}=x^6+tx^5+(t+1)x^4+(t^2+t+1)x^3$ and we
associated to each one a linear polynomial $g$ of the form
$g=\tilde{g}+g_{00}$ with $\tilde{g}=x-t^{104}-t^{14} + t^{13} + t^{11} +
t^{10} + t^8$. Instead of imposing that $\Res(f,g)$ has an irreducible factor of degree
$619$ as in the previous experiment, we aimed to find polynomials $g$ such
that $\Res(f,g)$ has first no, then many, small factors. The experiment
indicated that the correlation property explains part of the error observed
in Experiment~\ref{valid}, bringing it close to $3\%$, which is equal to our measurement error. 
\begin{table}
	\caption{Influence of the linear polynomials $g$ on the sieve
	efficiency. $f=\tilde{f}+f_0$ and $g=\tilde{g}+g_{00}$ with
	$\tilde{f}$ and $\tilde{g}$ given in section~\ref{sec:correlation}.}
	\begin{center}
\begin{tabular}{|c|c|c|c|}
\hline
$f_0$ & $g_{00}$ &small factors of $\Res(f,g)$&
efficiency (rels/sq)\\
\hline
$0\text{x}19$ & $0\text{xb2}$ & $-$            &$3.3$\\
$0\text{x}12$ & $0\text{xbf}$ & $-$            &$3.4$\\
$0\text{x}12$ & $0\text{xb8}$ & $-$            &$3.4$\\
\hline
$0\text{x}12$ & $0\text{xae}$ &$t\cdot (t+1)$          &$3.5$\\
$0\text{x}12$ & $0\text{xa0}$ &$t\cdot (t+1)$      &$3.5$\\
\hline
$0\text{x}12$ & $0\text{xbb}$ &$t\cdot (t+1)\cdot (t^3+t^2+1)$        &$3.5$\\
$0\text{x}1e$ & $0\text{xbb}$ &$t\cdot (t+1)\cdot(t^3+t+1)\cdot (t^6+\cdots)$     &$3.6$\\
\hline
\end{tabular}
\end{center}

\label{fg}
\end{table}

Since it is easy to associate many linear polynomials $g$ to a unique $f$ and
since linear polynomials have the same value of epsilon, it is interesting
to select $g$ such that $\Res(f,g)$ has many small factors. Nevertheless,
$f$ and $g$ are chosen such that $\Res(f,g)$ has degree $d
(\lfloor \frac{n}{d}\rfloor+1)$ and we require an irreducible factor of
degree $n$, leaving little room for additional factors. Moreover, when
$f$ imposes an extra factor to $\Res(f,g)$ (for example by having
$1$ projective and $q$ affine roots modulo $t$), depending on the
congruence of $n$ modulo $d$, it can be impossible to choose $g$ of optimal
degree in $t$. See \ref{ssec:classical} for an example.
\subsection{A sieve algorithm for alpha}
After we showed the relevance of epsilon, the polynomial selection comes to
evaluating epsilon on a large set of polynomials. One can try various ranges
of polynomials $f=\sum f_i x^i$, given by some degree bounds on their
coefficients $f_i$, which optimize sigma and/or impose a number of Laurent
roots, as shown in \ref{many}. The most time-consuming part of the
computations, the evaluation of alpha, can be done on each range by a
sieving procedure.  

The idea is that, for each
 irreducible polynomial $\ell\in\F_q[t]$, we compute $\alpha_\ell$ for all the
 residue polynomials $f$ of $\F_q[t][x]$ modulo $\ell$ and then we
 update the values of $\alpha_\ell$ for all the polynomials $f$ in the
 range. 

 Let $d$, $e_0,\ldots,e_{d-1}$ and $e_d$ be integers. We consider the range
 of the polynomials $f=\sum_{i=0}^d f_i x^i\in\F_q[t][x]$ such that for
 $i\in[0,d]$, $\deg_t f_i\leq e_i$. Call $H$ the set of values taken by the
 tuple $(f_d,\ldots,f_2)$ and $T$ those taken by $(f_1,f_0)$. Let $L$ be the
 set of irreducible polynomials up to a given bound. Let $k_\text{max}$ be a
 parameter and let us suppose
that, for all $\ell\in L$, all the roots $r\mod \ell^k$ with
$\deg(\ell^k)\geq k_\text{max}$ extend indefinitely.

For an irreducible polynomial $\ell$, Algorithm~\ref{alpha_sieve} below
computes $\alpha_\ell(f)$ for all $f$ in the range
and can be a subroutine to computing $\alpha(f)$ for the same range. We
denote by $residues(\ell^k)$ the set of polynomials in $\F_q[t]$ of degree
at most $k\deg\ell -1$.
\begin{algorithm}
\caption{The alpha sieve}
\begin{algorithmic}[1]
\State Initialize $\alpha_\ell$ to a vector of value $\deg(
\ell)/(\Norm(\ell)-1)$
\State{ $k_0 \gets \lceil k_\text{max} /\deg \ell \rceil $ }
\For{$(f_d,f_{d-1},\ldots,f_2)$ in $H$}
\For{$k$ in $[1..k_0]$ and $r$  in residues($\ell^k$)}
\For{$(f_1,f_0)$ in $T$ such that $f_1r+f_0\equiv -\sum_{i=2}^d f_i r^i\mod
\ell^k$}
\State{$f\gets\sum_{i=0}^d f_i x^i$}
\If{$k<k_0$}
\State{ $\alpha_\ell(f)$ $\gets$
$\alpha_\ell(f)$ $-$ $\deg(\ell)\Norm(\ell)^{1-k}/(\Norm(\ell)+1)$ }
\Else
\State{ $\alpha_\ell(f)$ $\gets$
$\alpha_\ell(f)$ $-$ $\deg(\ell)\Norm(\ell)^{2-k}/(\Norm(\ell)^2-1)$ }
\EndIf
\EndFor
\EndFor
\EndFor
\end{algorithmic}
\label{alpha_sieve}
\end{algorithm}

The correctness of Algorithm~\ref{alpha_sieve} follows from
Proposition~\ref{ca}. For a fixed value of $k_\mathrm{max}$,
the complexity per polynomial is
$O(1)$, as the most time-consuming steps are those in
lines $8$ and $10$. For comparison, in the naive algorithm, for each
polynomial, one needs to find the roots modulo $\ell$, which takes a
non-constant polynomial time in $d+\deg(\ell)$. In practice,
Algorithm~\ref{alpha_sieve} showed to be much faster, as Paul Zimmermann
used it to compute $\alpha_\ell(f)$, for all the irreducible polynomials
$\ell$ with $\deg\ell\leq 6$, on the range of the $2^{48}$ monic polynomials $f\in\F_2[t][x]$ of degree $6$ such that for
$i\in[0,6]$, $\deg_t f_i\leq 12-2i$.

\section{Sieving with inseparable polynomials}\label{insep}
\subsection{Particularities of the inseparable polynomials}
Despite the possibility of adding new technicalities, the inseparable
polynomials have been preferred in two record computations
\cite{Japanese2010}, \cite{Japanese2012}. Moreover, the Coppersmith
algorithm, implemented in \cite{Tho03}, can be seen as a
particular case of the FFS, using inseparable polynomials. In order to
present the Coppersmith algorithm from this point of view and in order to compare inseparable polynomials to separable ones, we start
with their definition, followed by their main properties.

\begin{deff}
	An irreducible non-constant polynomial $f\in\F_q[t][x]$ is said inseparable if
	$f'=0$, where $f'$ denotes the derivative with respect to $x$.
\end{deff}

	For every inseparable polynomial $f$, there exists a power of the
	characteristic of $\F_q$, $d$, and a polynomial
	$\hat{f}\in\F_q[t][x]$ such that $f=\hat{f}(x^d)$ and $\hat{f}'\neq
	0$. This simple property allows us to factor any irreducible
	polynomial $\ell$ in the function field of $f$ in two steps. First
	we factor $\ell$ in the function field of $\hat{f}$, then we further
	factor every prime ideal $\mathfrak{l}$ of $\hat{f}$. The main
	advantage is that some prime ideal factorization algorithms work
	only for separable polynomials (for example
	Magma implements the function fields only in the case of separable polynomials 
	\cite{MR1484478}). The factorization of the ideals $\mathfrak{l}$ of
	$\hat{f}$ in the function field of $f$ is easy using the following
	result.

	\begin{prop}(Corollary X.1.8,\cite{Lorenzini})\label{prop:lL} Let $p>0$ be a prime and $q$
	and $d$ two powers of $p$. Let $\hat{K}/\F_q(t)$ be a function field. Let
	$K/\hat{K}$ be an extension of polynomial $x^d-\theta_1$ with
	$\theta_1\in \hat{K}$. Then
	every prime ideal $\mathfrak{l}$ of $\hat{K}$ decomposes as
	\begin{equation}\label{lL}
	\mathfrak{l}\mathcal{O}_K=\mathfrak{L}^d
\end{equation}
for some prime ideal $\mathfrak{L}$ such that $\mathfrak{L}\bigcap
\mathcal{O}_{\hat{K}}=\mathfrak{l}$.
\end{prop}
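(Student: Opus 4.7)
The plan is to exploit the fact that $d$ is a power of the characteristic $p$, which makes the extension $K/\hat{K}$ purely inseparable, and then combine uniqueness of the prime above $\mathfrak{l}$ with a residue-field argument and the fundamental identity $\sum e_if_i=d$.

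First, since $\theta_1$ has a $d$-th root $\alpha$ in $K$ (by definition of the extension) and we are in characteristic $p$ with $d$ a power of $p$, we can rewrite $x^d-\theta_1=(x-\alpha)^d$. In particular $K=\hat{K}(\alpha)$ is purely inseparable of degree $d$ over $\hat{K}$. A short calculation using the Frobenius gives that for every $y=\sum_{i=0}^{d-1}a_i\alpha^i\in K$ with $a_i\in\hat{K}$,
\begin{equation*}
y^d=\sum_{i=0}^{d-1}a_i^d\theta_1^i\in\hat{K}.
\end{equation*}

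Next I would deduce that only one prime of $\mathcal{O}_K$ sits above $\mathfrak{l}$. Suppose $\mathfrak{L}_1$ and $\mathfrak{L}_2$ both contract to $\mathfrak{l}$. For any $y\in\mathfrak{L}_1$, the element $y^d$ lies in $\hat{K}\cap\mathfrak{L}_1=\mathfrak{l}\subseteq\mathfrak{L}_2$; since $\mathfrak{L}_2$ is prime, $y\in\mathfrak{L}_2$. By symmetry $\mathfrak{L}_1=\mathfrak{L}_2$, so there is a single prime $\mathfrak{L}$ above $\mathfrak{l}$.

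Now I would check that the residue degree is $f=1$. The residue field $\kappa(\mathfrak{l})=\mathcal{O}_{\hat{K}}/\mathfrak{l}$ is a finite extension of $\F_q$ (we are working with function fields in one variable over a finite field), hence is a finite, and in particular perfect, field of characteristic $p$. The extension $\kappa(\mathfrak{L})/\kappa(\mathfrak{l})$ is generated by the image $\bar{\alpha}$, which satisfies $\bar{\alpha}^d=\bar{\theta}_1$. Because $\kappa(\mathfrak{l})$ is perfect and $d$ is a power of $p$, the element $\bar{\theta}_1$ has a unique $d$-th root in $\kappa(\mathfrak{l})$, so $\bar{\alpha}\in\kappa(\mathfrak{l})$, giving $\kappa(\mathfrak{L})=\kappa(\mathfrak{l})$ and hence $f=1$.

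Finally I would apply the fundamental equality $\sum_i e_if_i=[K:\hat{K}]=d$ to the unique prime $\mathfrak{L}$ above $\mathfrak{l}$. With one prime of residue degree $1$, the ramification index is forced to be $e=d$, which is exactly the asserted factorization $\mathfrak{l}\mathcal{O}_K=\mathfrak{L}^d$. The main subtlety to get right is the residue-field step: one must be sure that the residue field at $\mathfrak{l}$ really is perfect, which in our setting comes from the hypothesis that $\hat{K}$ is a function field over a finite field; in a more general setting a separable $d$-th root of $\bar{\theta}_1$ need not exist in $\kappa(\mathfrak{l})$ and the ramification analysis would differ.
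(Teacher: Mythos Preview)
Your argument is correct. Note, however, that the paper does not actually supply a proof of this proposition: it is stated with a reference to Corollary~X.1.8 of Lorenzini's textbook and used as a black box. So there is no ``paper's own proof'' to compare against; you have filled in what the paper outsources.

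One small refinement worth making in your residue-field step: you argue that $\kappa(\mathfrak{L})$ is generated over $\kappa(\mathfrak{l})$ by the image $\bar{\alpha}$, but this presupposes $\mathcal{O}_K=\mathcal{O}_{\hat{K}}[\alpha]$, which you have not justified. The cleaner way is to use directly the fact you already established, namely that $y^d\in\hat{K}$ for every $y\in K$. If $y\in\mathcal{O}_K$ then $y^d$ is integral over $\mathcal{O}_{\hat{K}}$ and lies in $\hat{K}$, hence $y^d\in\mathcal{O}_{\hat{K}}$ since $\mathcal{O}_{\hat{K}}$ is integrally closed. Reducing modulo $\mathfrak{L}$ gives $\bar{y}^d\in\kappa(\mathfrak{l})$ for every $\bar{y}\in\kappa(\mathfrak{L})$, and perfectness of $\kappa(\mathfrak{l})$ then forces $\bar{y}\in\kappa(\mathfrak{l})$. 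This avoids any claim about generators of $\mathcal{O}_K$.
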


In the FFS algorithm, it is required to compute for each smooth element
$a-b\theta$ of the function field of $f$, the valuation of every prime ideal
$\mathfrak{L}$ in the factor base. For this, we start by factoring
$(a-b\theta)^d$ in the integer ring of the function field $\hat{K}$ of
$\hat{f}$:
$$(a-b\theta)^d\OO_{\hat{K}}=(a^d-b^d\theta_1)\mathcal{O}_{\hat{K}}=\prod_i
\mathfrak{l}_i^{e_i}$$ and then we obtain $(a-b\theta)\mathcal{O}_K=\prod_i
\mathcal{L}_i^{e_i}$ where the $\mathfrak{L}_i$ are such that
$\mathfrak{l}_i\mathcal{O}_K=\mathfrak{L}_i^d$. 
\subsection{Speed-up in the FFS due to the inseparability}
\label{ssec:speedup}
\begin{deff}
	Let $f$ and $g$ be two polynomials of $\F_q[t][x]$ such that
	$\Res(f,g)$ has an irreducible factor of degree $n$. Assume that
	$\deg_xg=1$ and write $f=\hat{f}(x^d)$ for some separable polynomial
	$\hat{f}$ and some integer $d$ which is either $1$ or a power of
	char($\F_q$). We call free relation any irreducible polynomial $\ell\in\F_q[t]$ such that $\ell\nmid
	\Disc(\hat{f})f_d$ and $(f\mod \ell\F_q[t][x])$ splits into degree-$1$ factors. 
\end{deff}
Clearly each free relation of norm less than the smoothness bound creates an
additive equation between the virtual logarithms of the ideals in the factor
base.

The number of free relations is given by Chebotarev's Theorem as follows.
First note that, due to Proposition~\ref{prop:lL}, a polynomial $\ell$ is a free relation for $f$ if and only
if it is a free relation for $\hat{f}$. Then, the proportion of free
relations among the irreducible polynomials is, according to Chebotarev's Theorem, asymptotically equal to the
inverse of the cardinality of the Galois group of the splitting field of
$\hat{f}$. Call $N$ the number of monic
irreducible polynomials in $\F_q[t]$ of
degree less than the smoothness bound.
Then, the number of free relation is:
\begin{equation}
	\# \{\text{free relations}\}=\frac{N}{\#\Gal(\hat{f})}.
\end{equation}

We compare this to the cardinality of the factor base. Since the cardinality
of the rational side is $N$ and $f$ has as many ideals as
$\hat{f}$, it is enough to evaluate the cardinality of the algebraic side. According to
Chebotarev's Theorem, the number of pairs $(\ell,r)$ such that
$f(r)\equiv0\mod \ell$, $\deg r <\deg \ell$ and $\deg \ell$ is less than the
smoothness bound is $\chi N$ where $\chi$ is the average number of
roots of $\hat{f}$ fixed by the automorphisms of the splitting field of
$\hat{f}$. It can be checked that each root of $\hat{f}$ is fixed by a
fraction 
$1/\deg(\hat{f})$ of the automorphisms, so $\chi=1$. Hence, asymptotically the
factor base has $2N+o(N)$ elements. 

Heuristically, $\Gal \hat{f}$ is the full symmetric group for all but a
negligible set of polynomials $\hat{f}$, so most often we have
$\#\Gal(\hat{f})=\deg(\hat{f})!$. We list the
results for $\deg f$ equal to $6$ and $8$ in Table~\ref{freerelations}. The case in which $d=3$ and $\deg \hat{f}=2$ brought a $\frac{4}{3}$-fold
speedup in \cite{Japanese2010}.
\paragraph{Coppersmith algorithm}
The case $d=8$ and $\deg \hat{f}=1$
corresponds to the Coppersmith algorithm.
Indeed, since half of the relations are free relations, the sieve is
accelerated by a factor of $2$. Moreover, since $\deg(\hat{f})=1$, the free
relations are particularly simple, linking exactly one element in the
rational side to one element in the algebraic side of the factor base (Proposition~\ref{prop:lL}).
Therefore one can rewrite the relations using only the elements in the
rational side, hence speeding up the linear algebra step by a factor of $4$.

\begin{table}

	\caption{Number of free relations of a pair $f,g$ with
		$f=\hat{f}(x^d)$,
		$\hat{f}$ separable and $\deg(g)=1$. $N$ is the number of
		irreducible monic polynomials of degree below the smoothness bound. The computations assume
that $\#\Gal(\hat{f})=(\deg \hat{f})!$.}
\begin{center}
\begin{tabular}{|c|c|c|c|c|}
\hline
char($\F_q$)  & $d$ & $\deg(\hat{f})$ & $\#\{$factor base$\}$& $\#\{\text{free relations}\}$\\
\hline
any                            & $1$   &  $6$     & $2N$      & $N/720$    \\
\hline                                          
        $2$                     & $2$   &  $3$    &  $2N$      & $N/6$      \\
        \hline                                           
        $3$                     & $3$   &  $2$    &  $2N$      & $N/2$      \\  
           \hline                                        
	   \hline                                        
        any                     & $1$   &  $8$    &  $2N$     & $N/40320$   \\
        \hline                                       
        $2$                     & $2$   &  $4$    &  $2N$     & $N/24$      \\
        \hline                                           
	$2$                     & $4$   &  $2$    &  $2N$     & $N/2$       \\
        \hline                                         
	$2$                     & $8$   &  $1$    &  $2N$     & $N$         \\
\hline

\end{tabular}
\end{center}

	\label{freerelations}
\end{table}
\subsection{Root property of inseparable polynomials}
\label{inseparable}
Despite the fact that the inseparable polynomials are relatively few, being
possible to exhaustively test them, it has its own interest to understand why
inseparable polynomials have a bad root property and in particular, why the
alpha value of many polynomials used in the Coppersmith algorithm is $2$.
Note that our proof that alpha converges covers only the case of separable
polynomials. In this section we give some results on their root property.  

First, the number of pairs $(\ell,r)$ with $\ell$ irreducible and $r$ a
polynomial
of degree less than $\deg(\ell)$ such that $f(r)\equiv 0\mod \ell$ has a
narrower range of values than it does for the separable polynomials. Indeed, as shown by the following result, this number
corresponds to the number of roots of $\hat{f}$. The bounds in Theorem~\ref{alpha_convergence}, when written explicitly, are narrower for polynomials of
degree $\deg(\hat{f})$ than for those of degree $\deg f$. For example, if
$\hat{f}$ is linear, this number is a constant.
\begin{lem} \label{lem:hat}Let $\hat{f}\in\F_q[t][x]$ be a polynomial, $d$ a power of the
	characteristic of $\F_q$ and $f=\hat{f}(x^d)$. Let $\ell$ be an
	irreducible polynomial in $\F_q[t]$. Then there is a
bijection between the sets $\{\hat{r}\in\F_q[t]\mid \deg
\hat{r}<\deg \ell, \hat{f}(\hat{r})\equiv 0\mod \ell\}$ and
	$\{r\in\F_q[t]\mid \deg r< \deg\ell, f(r)\equiv 0\mod \ell\}$. 
\end{lem}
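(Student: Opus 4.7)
The plan is to exploit the fact that $d$ is a power of the characteristic $p$ of $\F_q$, which makes the $d$-th power map a bijection on any finite field of characteristic $p$. The identification of roots modulo $\ell$ with elements of the residue field reduces the lemma to this purely algebraic fact.

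More precisely, set $\mathbb{F} := \F_q[t]/(\ell)$, which is a finite field of characteristic $p$ and of cardinality $q^{\deg \ell}$. The map sending a polynomial $r \in \F_q[t]$ with $\deg r < \deg \ell$ to its image in $\mathbb{F}$ is a canonical bijection between the set of such polynomials and $\mathbb{F}$. Under this identification, the condition $f(r) \equiv 0 \bmod \ell$ translates into $\bar r$ being a root of $f$ in $\mathbb{F}$, and likewise for $\hat{f}$.

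Now I would invoke the fact that since $d$ is a power of $p$, the Frobenius-type map $\phi_d : \mathbb{F} \to \mathbb{F}$, $y \mapsto y^d$, is a field homomorphism; being an injective endomorphism of a finite field it is an automorphism, hence a bijection. By construction $f(r) = \hat f(r^d)$, so for any $r \in \F_q[t]$ with $\deg r < \deg \ell$ we have $f(r) \equiv 0 \bmod \ell$ if and only if $\hat f(\phi_d(\bar r)) = 0$ in $\mathbb{F}$. Thus $\phi_d$ sends the roots of $f$ in $\mathbb{F}$ bijectively onto the roots of $\hat f$ in $\mathbb{F}$; transporting this back through the identification with polynomials of degree $<\deg \ell$ yields the claimed bijection.

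There is essentially no obstacle here beyond being careful about definitions: the only thing one needs to notice is that $d$ being a power of the characteristic is exactly what is needed to make $\phi_d$ an automorphism (if $d$ were only coprime to $p$ this would still hold in the multiplicative group, but the argument would be different; if $d$ shared a factor with $q$ that is not a power of $p$ it would fail altogether). I would not expect to need Hensel's lemma, discriminant considerations, or anything about the factorization of $\ell$ in the larger function field, since the statement concerns only residues modulo $\ell$ itself, not modulo higher powers.
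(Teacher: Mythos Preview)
Your proof is correct and follows essentially the same approach as the paper: both reduce the statement to the observation that the $d$-th power map is a bijection on the residue field $\F_q[t]/(\ell)$. The only cosmetic difference is that the paper justifies this bijectivity by noting that $d$ is coprime to the order $\Norm(\ell)-1$ of the multiplicative group, whereas you use that $y\mapsto y^d$ is an iterate of Frobenius and hence a field automorphism; your version has the minor advantage of handling the zero residue without a separate remark.
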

\begin{proof}
	The non-null residues of $\ell$ form a group of cardinality
	$\Norm(\ell)-1$, which is coprime to $q$ and hence to $d$. Therefore
any root $\hat{r}$ accepts one and only one $d^\text{th}$ root modulo
	$\ell$. 
\end{proof}
The second reason for having bad values of alpha is that most of the roots
modulo irreducible polynomials $\ell$ do not lift to roots modulo $\ell^2$.
Recall the following classical result.
\begin{lem} Let $\ell\in\F_q[t]$ be an irreducible polynomial. Write
	$(\F_q[t]/\langle\ell^2\rangle)^*$ for the group of residues modulo
	$\ell^2$ which are not divisible by $\ell$. Put
	$U=\{e^{\Norm(\ell)}\mid e \in (\F_q[t]/\langle\ell^2\rangle)^*\}$
	and $V=\{1+\ell w\mid \deg w<\deg
	\ell\}$. Then we have 
	\begin{equation*}
		(\F_q[t]/\langle\ell^2\rangle)^*\simeq U\times V.
	\end{equation*}
\end{lem}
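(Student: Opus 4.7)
Write $G = (\F_q[t]/\langle \ell^2 \rangle)^*$ and $p = \car(\F_q)$. The plan is to identify $V$ as the $p$-Sylow of the finite abelian group $G$ and $U$ as its complementary Hall subgroup; the splitting then follows immediately from the structure theorem for abelian groups whose order factors as a product of coprime parts.

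First I would record two orders and one reduction map. The ring map $\F_q[t]/\langle\ell^2\rangle \to \F_q[t]/\langle\ell\rangle$ is surjective with kernel $\ell \cdot \F_q[t]/\langle\ell^2\rangle$, so the induced group homomorphism $\pi \colon G \twoheadrightarrow (\F_q[t]/\langle \ell \rangle)^*$ has kernel exactly $V$. Counting residues one checks $|V| = \Norm(\ell)$ and $|(\F_q[t]/\langle\ell\rangle)^*| = \Norm(\ell)-1$, hence $|G| = \Norm(\ell)(\Norm(\ell)-1)$. Since $\Norm(\ell) = q^{\deg \ell}$ is a power of $p$, we have $\gcd(|V|,|G|/|V|) = 1$; in particular $V$ is the $p$-Sylow of $G$, and every element of $V$ has order dividing $\Norm(\ell)$.

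Next I would check that $U \cap V = \{1\}$. By Fermat's little theorem in the field $\F_q[t]/\langle\ell\rangle \simeq \F_{\Norm(\ell)}$, we have $\pi(e^{\Norm(\ell)}) = \pi(e)^{\Norm(\ell)} = \pi(e)$ for every $e \in G$. Hence if $e^{\Norm(\ell)}$ lies in $V = \ker \pi$, then $\pi(e) = 1$, i.e.\ $e \in V$; but then $e^{\Norm(\ell)} = 1$ because $|V|$ is a $p$-power, so indeed $e^{\Norm(\ell)} = 1$. The same computation shows that the restriction $\pi|_U$ is \emph{bijective} onto $(\F_q[t]/\langle \ell\rangle)^*$: it is surjective because any $\bar e$ equals $\pi(e^{\Norm(\ell)})$, and its kernel is $U \cap V = \{1\}$. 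In particular $|U| = \Norm(\ell)-1$.

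Finally I would produce the decomposition $G = UV$ by a Bezout argument. Write $1 = a\Norm(\ell) + b(\Norm(\ell)-1)$ with $a,b \in \Z$; then for every $g \in G$,
\begin{equation*}
g \;=\; g^{a\Norm(\ell)} \cdot g^{b(\Norm(\ell)-1)},
\end{equation*}
where $g^{a\Norm(\ell)} = (g^{a})^{\Norm(\ell)} \in U$, and $g^{b(\Norm(\ell)-1)}$ has order dividing $\Norm(\ell)$ since $(g^{b(\Norm(\ell)-1)})^{\Norm(\ell)} = g^{b|G|} = 1$, hence lies in the unique $p$-Sylow $V$. Combined with $U \cap V = \{1\}$ and the abelianness of $G$, this gives the internal direct product $G = U \times V$. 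The main (mild) obstacle is just the verification that the $\Norm(\ell)$-th power map hits exactly the complement of $V$; that is what the Fermat identity above is designed to supply.
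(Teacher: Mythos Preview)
Your argument is correct. The paper itself does not prove this lemma: it is introduced with ``Recall the following classical result'' and stated without proof, so there is no approach to compare against. Your Sylow/Hall splitting via the reduction map $\pi$ and the Bezout decomposition is a clean and standard way to justify it; the only cosmetic point is that when you write ``$e^{\Norm(\ell)} = 1$ because $|V|$ is a $p$-power'', the actual reason is simply Lagrange's theorem with $|V| = \Norm(\ell)$.
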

The group $U$ has order $\Norm(\ell)-1$ which is coprime to $d$, so
$d^\text{th}$ roots always exist and are unique in $U$. On the other hand, in
$V$, only the neutral element is a $d$-th power.
As a consequence only a fraction $1/\#V=1/\Norm(\ell)$ of the residues $\hat{r}$
modulo $\ell^2$ can have $d^\text{th}$ roots modulo $\ell^2$. Let us make
the heuristic that the roots of $\hat{f}$ modulo $\ell^2$ are random elements
of $\F_q[t]/\langle \ell^2\rangle$. Then only a small fraction of the roots
of $f$ lift modulo squares of irreducible polynomials and, for a non
negligible fraction of polynomials $f$ no root $r$ modulo some irreducible
polynomial $\ell$ lifts modulo $\ell^2$.  

Among the Coppersmith polynomials $f$, i.e. such that $\hat{f}$ is
linear, many $f$ are such that no modular root of $f$ lifts modulo squares.
Let us compute the value of alpha in this situation.
\begin{lem}
	Let $\hat{f}$ be a linear polynomial of $\F_q[t][x]$, $d$ a power
	of the characteristic of $\F_q$ and put $f=\hat{f}(x^d)$. Assume
	that there is no pair of polynomials $\ell$ and $r$ with $\ell$
	irreducible and $r$ of degree less than that of
	$\ell^2$ such that $f(r)\equiv0\mod\ell^2$. Then we have
	$$\alpha(f)=\frac{2}{q-1}.$$
\end{lem}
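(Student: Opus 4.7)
The plan is to apply Proposition~\ref{ca} uniformly to every irreducible $\ell$ and compare the result with the value $\alpha_\ell(g) = \deg\ell/(\Norm(\ell)^2-1)$ obtained for a linear polynomial $g$ in the proof of Theorem~\ref{alpha_x}. Writing $\hat f(x) = a(t)x + b(t)$ with $\gcd(a,b)=1$, we have $f(x) = a(t)x^d + b(t)$, and the first step is to enumerate $S(f,\ell)$ completely for every irreducible $\ell$.

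At level $k=1$: by Lemma~\ref{lem:hat} applied to $\hat f$, for every $\ell \nmid a$ there is exactly one affine root of $f$ modulo $\ell$, namely the unique $d$-th root of $-b/a$ modulo $\ell$. For the finitely many $\ell \mid a$ (which necessarily satisfy $\ell \nmid b$), $f$ reduces to the nonzero constant $b$ modulo $\ell$ and so has no affine root, but since $F(1,0) = a \equiv 0 \mod \ell$, the point $r=0$ is the unique projective root. Thus every irreducible $\ell$ contributes exactly one pair $(r_\ell,1)$ to $S(f,\ell)$. At level $k\geq 2$: the hypothesis forbids affine roots modulo $\ell^2$ for every $\ell$, which by reduction modulo $\ell^2$ also forbids affine roots at any higher level; a direct inspection of $F(1,c\ell) \equiv a \mod \ell^2$ (since $d\geq 2$) shows a projective root at $k=2$ can only occur when $\ell^2\mid a$, a case we may exclude under the hypothesis. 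Hence $S(f,\ell) = \{(r_\ell,1)\}$ for every irreducible $\ell$.

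Substituting this single element into the formula of Proposition~\ref{ca} gives
\begin{equation*}
\alpha_\ell(f) = \deg\ell\left(\frac{1}{\Norm(\ell)-1} - \frac{\Norm(\ell)}{\Norm(\ell)+1}\cdot\frac{1}{\Norm(\ell)}\right) = \frac{2\deg\ell}{\Norm(\ell)^2-1} = 2\,\alpha_\ell(g),
\end{equation*}
where the last equality is the per-$\ell$ contribution computed in the proof of Theorem~\ref{alpha_x}. Summing over all $\ell$ and invoking Theorem~\ref{alpha_x} then yields $\alpha(f) = 2\alpha(g) = \frac{2}{q-1}$, as claimed; note that the absolute convergence follows automatically from that of $\alpha(g)$ since the summands coincide up to the factor $2$.

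The main subtlety is the uniform enumeration of $S(f,\ell)$ across \emph{all} irreducible $\ell$. The count of affine roots drops to zero on the finitely many primes dividing the leading coefficient $a$, and it is essential that this deficit be exactly compensated by the projective root $r=0$, so that $n_\ell=1$ holds without exception and the identity $\alpha_\ell(f) = 2\alpha_\ell(g)$ remains valid for every $\ell$. Beyond this bookkeeping the argument is formal.
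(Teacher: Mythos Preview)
Your proof is correct and follows essentially the same approach as the paper: enumerate $S(f,\ell)$ (one root at level~$1$, none at level~$\geq 2$), plug into Proposition~\ref{ca} to obtain $\alpha_\ell(f)=2\deg\ell/(\Norm(\ell)^2-1)$, and then invoke the sum computed in Theorem~\ref{alpha_x}. If anything, you are more careful than the paper in distinguishing the affine and projective contributions at level~$1$ and in citing Proposition~\ref{ca} directly rather than Corollary~\ref{cor_ell} (whose simple-root hypothesis fails here since $f'=0$); the only soft spot is your dismissal of projective roots at level~$\geq 2$ when $\ell^2\mid a$, which the stated hypothesis does not literally exclude, but the paper glosses over this point as well.
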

\begin{proof}
	By Lemma~\ref{lem:hat}, for all $\ell$, $f$ has exactly $n_\ell=1$ affine
	or projective roots modulo $\ell$. 
	By Corollary~\ref{cor_ell}, for all irreducible polynomial
	$\ell$ we have
	\begin{equation*}
		\alpha_\ell(f)=\deg\ell\bigg(\frac{1}{\Norm(\ell)-1}-\frac{1}{\Norm(\ell)}\frac{\Norm(\ell)}{\Norm(\ell)+1}\bigg)=2\frac{\deg\ell}{q^{2\deg\ell}-1}.
\end{equation*}
Hence we obtain $\alpha(f)=2 \bigg(\sum_{k\geq 1}\frac{kI_k}{q^{2k}-1}\bigg)$ with $I_k$
the number of irreducible monic polynomials of degree $k$ in $\F_q[t]$. The
sum in the parenthesis was computed in Proposition~\ref{alpha_x} and equals
$1/(q-1)$. This completes the calculations.

\end{proof}
\section{Applications to some examples in the literature}\label{sec:examples}

\subsection{Thom\'e's record using the Coppersmith algorithm}
Thomé \cite{Tho03} solved the discrete logarithm problem in $\F_{2^{607}}$
using the Coppersmith algorithm. Following this algorithm, one sets
$g=x-t^{152}$ and $f=x^4+t\lambda$ for some polynomial $\lambda\in\F_q[t]$ such that
$t^{607}+\lambda$ is irreducible. The polynomial
$\lambda_0=t^9+t^7+t^6+t^3+t+1$ used by Thomé minimizes the degree of
$\lambda$. If one searches for an alternative, it is neccessary to increase
$\deg_t f$, but this is possible without affecting much the size property.
Indeed, the sensible choice is to set the skewness $s$ to $7$, so sigma does not
vary much if one increases $\deg f_0$. By testing the polynomials $\lambda$
with $\deg \lambda \leq 18$, we determined that the best alpha corresponds to
$f_1=x^4+t(t^{16} + t^{12} + t^{11} + t^7 + t^4 + 1)$. We compare the two
polynomials in Table~\ref{table} using the functions defined in this article
as well as the sieve efficiency measured with the implementation
of~\cite{DeGaVi12} and the parameters
in Experiment~\ref{valid}.
\begin{table}
	\begin{center}
	\begin{tabular}{|c|rccccc|}
	\hline	               & $\alpha(f)$ & $\alpha_\infty(f,s)$ &
		  $\sigma(f,s,e)$ & $\epsilon(f,s,e)$ &
		       $E(f,g,s,e,\beta)$& efficiency\\
		\hline
		$f_0$ &$1.27$  & $0$ & $108.12$   &$109.39$ & $1.82\cdot10^8$   & $15.2$\\
	 $f_1$ &$-1.05$ & $0$ & $108.42$   &$107.36$ & $2.10\cdot 10^8$ & $18.8$\\
		\hline
	\end{tabular}
\end{center}
\caption{Coppermith polynomials for $\F_{2^{607}}$. The parameters in the
	table are
$s=7$, $e=24.5$ and $\beta=28$. The efficiency, measured in rels/sq, uses the
parameters in Experiment~\ref{valid}.}
\label{table}
\end{table}

\subsection{Joux-Lercier's implementation of the classical variant of the FFS}
\label{ssec:classical}
Joux and Lercier \cite{JoLe02,JoLe07} considered the fields $\F_{2^n}$ with
$n=521$, $607$ and $613$. For $n=607$ they set $f_2=x^5 + x + t^2 + 1$ and
$g_2=(t^{121} + t^8 + t^7 + t^5 + t^4 + 1)x + 1$. If one searches for an
alternative, the sensible choice is to improve the root property without
changing the size property. Since $\deg_tf_2=2$ we tested all the
polynomials whose degree in $t$ is $1$ or $2$. 
\begin{experiment}
	There are $2^{18}$ polynomials $f$ such that $\deg_tf\leq 2$, out
	of which $2^{12}$ have $\deg_tf\leq 1$. 
	There were $1776$ irreducible polynomials with $\deg_tf\leq 1$ whose
	alpha is below $3$. There were $650$ irreducible 
	polynomials $f$, with $\deg_t f\leq 2$, whose alpha is negative and
	such that the partial sum of alpha up to degree $6$ is less than $0.5$. 
	
	The best $10$ values for
$\epsilon$ with skewness $s=0$ and sieve parameter $e=24.5$ were all obtained
for polynomials $f$ with $\deg_t f=2$. The best value was that of
$f_3=(t^2 + t)x^5 + (t^2 + t + 1)x^4 + (t + 1)x^3 + t^2x^2 + t^2x + t^2$. We
could not associate a linear polynomial $g$ with $\deg_tg=121$ because
$\Res(f,g)$ is always divisible by $t$ (see \ref{sec:correlation} for more
details), hence we took $g_3=x + t^{122} +
t^{13} + t^{11} + t^6 + t^5 + t^3 + t^2$. The best $f$ for which we could
select a linear polynomial $g$ of degree $121$ was
$f_4=(t^2+t+1)x^5+(t^2+t+1)x^4+x^3+(t^2+t+1)x^2+(t^2+t+1)x+t^2+t$, for
which we took $g_4=x + t^{121} + t^{12} + t^{11} + t^8 + t^6 +
t^2 + 1$. In Table~\ref{table:JL} we compare $(f_3,g_3)$ and $(f_4,g_4)$ to
$(f_2,g_2)$. Note also that all the polynomials $f$ tested have a small genus,
which could explain the small variance of alpha when $\deg_t f\leq 2$.  
\end{experiment}

\begin{table}
\begin{center}
	\begin{tabular}{|c|rccccc|}
	\hline
	$f,g$ & $\alpha(f)$ & $\alpha_\infty(f,s)$ & $\sigma(f,s,e)$  &
	$\epsilon(f,s,e)$ & $E(f,g,s,e,\beta)$ &efficiency \\
	\hline
	$f_2,g_2$ & $2.15$ &   $0$        & $122.33$ & $124.46$ &
     $8.54\cdot 10^8$ & $66.0$  \\ 
	$f_3,g_3$ & $-0.24$&   $0$        & $123.66$ & $123.36$ & $8.64\cdot
	      10^8$ & $73.8$    \\
	$f_4,g_4$ & $-0.10$&   $0$        & $123.66$ & $123.42$ & $9.49\cdot
		10^8$ & $76.0$    \\
	\hline
	\end{tabular}
\end{center}
\caption{Classical FFS polynomials for $\F_{2^{607}}$. The parameters are
$s=1$, $e=24.5$, $\beta=28$. The efficiency, measured in rels/sq, uses the
parameters in Experiment~\ref{valid}.}
\label{table:JL}
\end{table}
\subsection{Joux-Lercier's two rational side variant}

We recall briefly the ``two rational sides'' variant of~\cite{JoLe06}, 
and study its properties according to our criteria. This variant selects two
polynomials $f=\gamma_1(x)-t$ and
$g=x-\gamma_2(t)$ for some $\gamma_1$ and $\gamma_2$ in $\F_q[t]$. Then
one collects coprime pairs $(a(t),b(t))\in\F_q[t]$ such that 
both 
\begin{equation}\label{x-side-t-side}
	a(\gamma_1(x))-xb(\gamma_1(x))\text{ and }a(t)-\gamma_2(t)b(t)
\end{equation}
 are $\beta$-smooth for some smoothness bound $\beta$. 
It can be easily checked that the expressions in Equation~\ref{x-side-t-side} have
precisely the same degrees as in the classical FFS where we consider the norms of
$a-bx$ with respect to the function fields of $f$ and $g$ when $\deg_tf=1$ and $\deg_xg=1$. 
Therefore, this variant does not overpass the classical one in terms of size property. 

As for the root properties, note that the expression in Equation~\ref{x-side-t-side} which is a
polynomial in $\F_q[t]$ is the norm of $a-bx$ with respect to the linear
polynomial $g$, so its alpha value is constant and equal to one of the linear
side of the classical variant. On the other hand, the root property of the polynomial in 
$\F_q[x]$ in Equation~\ref{x-side-t-side} can not be directly measured
with our definition of alpha. Still, 
the number of
polynomials $f$ among which are selected to have a good root property is
small, so we can not expect a large deviation; e.g. there are $2^6$ values of $f$ such
that $\deg_xf=6$.

\subsection{Records on pairing-friendly curves}
The fields $\F_{3^{6n}}$ are of particular interest in cryptography
as one can break the cryptosystems which use pairing-friendly curves over
$\F_{3^n}$ by solving the discrete logarithm problem in $\F_{3^{6n}}$. The
recently proposed algorithm of Joux \cite{Jo13} proved to be very fast for
the fields of composite degree. It rendered FFS obsolete in this case and
drastically reduced the security of these curves. This section is also
interesting for illustrating the behaviour separable polynomials.

These fields allow us to run the FFS with a base field $\F_{3^d}$ with
$d=2,3$ or $6$.  Hence we collect coprime pairs $(a,b)$ of polynomials
in $\F_{3^d}[t]$ such that both $F(a,b)$ and $G(a,b)$ factor into small
degree polynomials of $\F_{3^d}[t]$. The Galois variant
\cite{JoLe06} consists in choosing the polynomials $f$ and $g$ to have their
coefficients in $\F_3[t]$ rather than $\F_{3^d}[t]$. Its main advantage is
that, due to Galois properties, the factor base is reduced by a factor
of $d$.

Hayashi \textit{et al.} \cite{Japanese2010} used the base field $\F_{3^6}$
and the polynomial $f_i=x^6+t$ to break curves over $\F_{3^{71}}$. Two years
later, Hayashi \textit{et al.} \cite{Japanese2012} used the
base field $\F_{3^3}$ and again $f_i=x^6+t$ to break cryptosystems over
$\F_{3^{97}}$. 

Since the polynomial $f_i$ is inseparable, as explained in
\ref{ssec:speedup}, one quarter of the relations collected by the FFS are free. This roughly translates
into a $4/3$-fold speedup with respect to the separable polynomials having
the same sieve efficiency. Note that $f_i$ has the best epsilon among the
$486$ inseparable polynomials $f$ in $\F_3[t][x]$ with
$\deg_tf\leq 1$ and $\deg_xf=6$. 

A better choice can be only a separable polynomial with a
better efficiency. Since the efficiency of a polynomial depends on the base
field of the factor base, we distinguish the case of $\F_3$ from the case
of $\F_{3^d}$ with $d=2,3$ or $6$.

\begin{experiment}\label{exp:char3}
	Since $\deg_t(f_i)=1$ we can use any of the $8\cdot 3^{12}$
	polynomials $f$ in $\F_3[t][x]$ such that $\deg_tf\leq 1$, without
        changing the size property.  The best alpha with respect to
        $\F_{3}$ corresponded to $f_s=tx^6 -  tx^4 + (-t + 1)x^3 + (t - 1)x + t$.
\end{experiment}
Since alpha has a small variance on the polynomials tested in
Experiment~\ref{exp:char3}, we also consider the separable polynomial
$f_{s'}=x^6 - x^2 + (t^8 + t^6 - t^4 + t^2 + 1)$, which is well suited when
the skewness parameter is set to $1$. Table~\ref{tab:char3} compares $f_s$
and $f_{s'}$ to $f_i$ for some randomly chosen
linear polynomials. 
\begin{table}
	\begin{center}
		\begin{tabular}{|c|rccccc|}
	\hline
	$f,g$ & $\alpha(f)$ & $\alpha_\infty(f,s)$ & $\sigma(f,s,e)$ &
	$\epsilon(f,s,e)$ & $E(f,g,s,e,\beta)$ & efficiency \\
	\hline
	$f_i$ & $1.33$ &   $0$        & $94.00$ & $95.33$  & $1.03\cdot10^8$  & $14.6$ \\ 
	$f_s$ & $0.29$ &   $0$        & $94.75$ & $95.04$  & $1.23\cdot10^8$  & $17.0$ \\
$f_{s'}$   & $-3.67$   &   $0$        & $96.75$ & $93.03$  & $1.61\cdot10^8$  & $21.3$ \\
	\hline
	\end{tabular}
	\end{center}
	\caption{Polynomials $f$ for fields of characteristic $3$. The last
	column was obtained using same software as in Experiment~\ref{valid}
and with parameters fbb0=fbb1=14, lpb0=lpb1=17, S=1 and $q0=t^{15}$.}
	\label{tab:char3}
\end{table}

In the case when the base field is $\F_{3^6}$ and $\F_{3^3}$ the evaluation of
alpha is slower, with a factor of $200$ compared to the case of $\F_{3^6}$. Note first
that the polynomial $f_s$, whose root property over $\F_3$ is better than
that of $f_i$, has a
poorer value of alpha when the base field is $\F_{3^6}$. In
Table~\ref{tab:normalized_alpha} we use $\frac{\log q}{\log 2}\times \alpha$ as an
alternative of alpha which allows us to compare polynomials $f$ with
coefficients in different rings $\F_q[t]$. The values of alpha are
approximated by considering only the contribution of at most $1000$ irreducible polynomials
in $\F_{3^d}$ with $d=1,2,3$ or $6$. Note that the values of $\frac{\log
q}{\log 2}\times \alpha$ are close to each other when $q=3^6$. This opens the question of
how does the distribution of alpha evolve when we compute it with respect to
a factor base in $\F_{3^6}[t]$ but for which the polynomials $f$ are in
$\F_3[t][x]$. 

\begin{table}
	\begin{center}
		\begin{tabular}{|c|cccc|}
		\hline
		$f$ &  $\overline{\alpha}(f,\F_3)$  &
                $\overline{\alpha}(f,\F_{3^2})$ &
	       $\overline{\alpha}(f, \F_{3^3})$  &
		$\overline{\alpha}(f,\F_{3^6})$ \\  
		\hline
		$f_i$ & $2.11$  & $0.35$   & $0.53$  & $0.03$  \\
		$f_s$ & $0.46$  & $0.16$   & $0.21$  & $0.08$  \\  
		\hline
	\end{tabular}
\end{center}
	\caption{The values of alpha with respect to different base field.
	The notation $\overline{\alpha}(f,\F_q)$ denotes
$\log(q)/\log(2)\cdot \alpha(f)$ with alpha corresponding to the field
$\F_q$.}
\label{tab:normalized_alpha}
\end{table}


\section{Conclusions and open questions}\label{conclusion}
Improving on Joux and Lercier's method of polynomial selection \cite{JoLe02}, we noted
that a unique polynomial $f$ can be used to solve the discrete logarithm
problem on a range of inputs. Since the selection of $f$ can be seen as a
precomputation, we developed a series of functions which compare arbitrary
polynomials and which are much faster than directly testing the sieve
efficiency. In particular we obtained a sieving procedure for computing
alpha, the function which measures the root property and we defined a
function for measuring the cancellation property.

The case of inseparable polynomials was of particular interest as it has no equivalent notion in the NFS
world. We showed that inseparable polynomials have the advantage of a large
number of free relations, but most of the inseparable polynomials have a bad
root property. The last section applied the new functions to some records in the
literature. 

The paper also opened some questions. First, thanks to the polynomial selection proposed in
\cite{JoLe03} this discussion could be adapted to the prime field
computations by NFS. Secondly, the proof of the convergence of alpha seems
to indicate that the distribution of alpha is influenced by the genus of the
function fields. Finally, in the case of the Galois variant, it is
interesting to know how does the variance of alpha evolve when restricted to
Galois polynomials. If the variance is small enough, the sensible choice for the
Galois variant seems to be the inseparable polynomials.
\section*{Acknowledgement}
The author is indebted to his colleagues in the Catrel project, a common
study of the discrete logarithm problem in finite
fields, especially to Cyril Bouvier, Jérémie Detrey, Pierrick Gaudry, Hanza Jeljeli,
Emmanuel Thomé, Marion Videau and Paul Zimmermann.

\bibliographystyle{alpha}
\bibliography{barbulescu}

\appendix
\section{Appendix}

\begin{lem}\label{lem_appendix}
	Under the notations of Proposition~\ref{ca}, Equation~\ref{appendix} holds.
\end{lem}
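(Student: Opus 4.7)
The plan is to establish both equalities in Equation~\ref{appendix} by reducing to finite counts on truncated sets and then carefully interchanging limits with infinite sums. Let $D_N = \{(a,b) \in \F_q[t]^2 : \deg(a,b) \leq N,\ \gcd(a,b) \not\equiv 0 \bmod \ell\}$, so that by definition $a_{hom}^{(\ell)}(f) = \lim_{N \to \infty} |D_N|^{-1} \sum_{(a,b) \in D_N} v_\ell(F(a,b))$, and each $\Prob(\cdot)$ appearing on the right-hand side is a similar limit of counting ratios over $D_N$.

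For the first equality, I would invoke the elementary layer-cake identity for non-negative integer valued quantities, $v_\ell(F(a,b)) = \sum_{k \geq 1} \mathbf{1}[v_\ell(F(a,b)) \geq k]$. On the truncated set $D_N$ the valuations are bounded by $K_N := \lfloor (dN + \deg_t f)/\deg \ell \rfloor$, so everything reduces to a finite double sum which may be safely reordered. Dividing by $|D_N|$ and letting $N \to \infty$, each individual term $|D_N|^{-1}\#\{(a,b) \in D_N : v_\ell(F(a,b)) \geq k\}$ tends to $\Prob(v_\ell(F(a,b)) \geq k)$, which yields the first equality once the interchange of the tail sum with the limit is justified.

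For the second equality I would use the structural description already set up just before Equation~\ref{P(r)}: the condition $v_\ell(F(a,b)) \geq k$ for an $\ell$-coprime pair $(a,b)$ is equivalent to the image of $(a:b)$ in $\mathbb{P}^1(\F_q[t]/\ell^k)$ being an affine or projective root of $f$ modulo $\ell^k$. Consequently $\Prob(v_\ell(F(a,b)) \geq k)$ decomposes as a disjoint union over those $r$ with $(r,k) \in S(f,\ell)$, each sub-event contributing $\Prob((a:b) \equiv r \bmod \ell^k)$ as computed in Equation~\ref{P(r)}. Summing over $k$ and reindexing by $S(f,\ell)$ then gives the right-hand side of Equation~\ref{appendix}, provided the double sum converges absolutely.

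The main obstacle is the uniform tail estimate required to justify both interchanges. I would prove it by lifting the analysis of $S(f,\ell)$ already performed in the proof of Proposition~\ref{ca}: the set $S(f,\ell)$ consists of at most $d$ infinite $\ell$-adic branches (each contributing a geometric series $\sum_k \Norm(\ell)^{-k}\cdot \frac{\Norm(\ell)}{\Norm(\ell)+1}$) plus finitely many stray elements of bounded level. Hence $\Prob(v_\ell(F(a,b)) \geq k) \leq C_f \Norm(\ell)^{-k}$ for an explicit constant depending only on $f$ and $\ell$, uniformly in $N$. Standard dominated convergence then legitimizes both the exchange of $\lim_N$ with $\sum_k$ in the first equality and the rearrangement of the double sum in the second equality, completing the argument.
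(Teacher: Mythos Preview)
Your argument is correct and shares the paper's overall strategy --- decompose by level $k$, identify $\{v_\ell(F(a,b))\geq k\}$ with the disjoint union over roots $(r,k)\in S(f,\ell)$, and control the tail via the structure of $S(f,\ell)$ --- but the key technical step is handled differently. You pass directly from $a_{hom}^{(\ell)}$ to $\sum_k \Prob(v_\ell\geq k)$ by the layer-cake identity and justify the exchange $\lim_N\leftrightarrow\sum_k$ by dominated convergence with a geometric tail bound. The paper instead inserts the intermediate form $\sum_k k\,\Prob(v_\ell=k)$ and proves $a_{hom}^{(\ell)}=\sum_k k\,\Prob(v_\ell=k)$ by a liminf/limsup sandwich: the truncated averages $a_{\text{hom}}(f;N,k_0)$ give the lower bound for every $N$, while along the subsequence $N=k_1\deg\ell$ (where the finite proportions coincide exactly with the limiting probabilities) one bounds $|a_{\text{hom}}(f;N)-a_{\text{hom}}(f;N,N/\deg\ell)|$ by $\deg f\cdot(N\deg f+|f|)/q^{N/\deg\ell}\to 0$. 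Your route is tidier; the paper's avoids ever needing a uniform-in-$N$ bound on $p_{N,k}:=|D_N|^{-1}\#\{v_\ell\geq k\}$. One small caveat on your side: the claimed bound $p_{N,k}\leq C\,\Norm(\ell)^{-k}$ is only immediate for $k\deg\ell\leq N+1$; for larger $k$ each root contributes at most one pair rather than $q^{N+1}/\Norm(\ell)^k$, but since those $k$ number $O(N)$ and each contributes $O(q^{-N})$ to the sum, their total is $o(1)$ and the exchange still goes through.
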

\begin{proof}
	First, the equality below holds as we can change the summation order of
	an absolutely convergent series. \
	\begin{equation*}
\sum_{k=1}^\infty\Prob\bigg(v_\ell(F(a,b))\geq k\bigg)=\sum_{(r,k)\in
	S(f,\ell)}\Prob\bigg ((a:b)\equiv r\bmod \ell^k\bigg).
\end{equation*}
Secondly, since for all $k$, $\Prob(v_\ell(F(a,b))=k)=\Prob(\ell^k\mid
F(a,b))-\Prob(\ell^{k+1}\mid F(a,b))$ and $k\Prob(\ell^k\mid
F(a,b))\rightarrow 0$, we have 
\begin{equation*}
\sum_{k=1}^\infty k\Prob\bigg(v_\ell(F(a,b))=k\bigg)
	=\sum_{k=1}^\infty\Prob\bigg(v_\ell(F(a,b))\geq k\bigg).
\end{equation*}
Finally, let us prove that $a_\text{hom}(f)$ exists and equals $\sum_{k=1}^\infty
k\Prob\big(v_\ell (F(a,b))=k\big)$. For each $N$ and $k$ put
\begin{equation*}
a_\text{hom}(f;N,k)=\frac{\sum\{\min(v_\ell (F(a,b)),k) \mid
\gcd(a,b)\not \equiv0 \bmod \ell, \deg a,\deg b\leq N\}}{\# \{(a,b)\mid 
\gcd(a,b)\not\equiv 0\bmod \ell, \deg a,\deg b\leq N\}   }.
\end{equation*}
Call $a_\text{hom}(f;N)$ the expression above when $\min(v_\ell (F(a,b)),k)$
is replaced with $v_\ell (F(a,b))$.
On the one hand, for any $k_0\in \N$, we have $a_\text{hom}(f;N,k_0)\leq
a_\text{hom}(f;N)$, so
\begin{equation*}
	\sum_{k=1}^{k_0}k\Prob(v_\ell (F(a,b))=k)\leq
	\lim\inf_{N\rightarrow\infty}a_\text{hom}(f;N).
\end{equation*}
On the other hand, let $k_1$ be large enough
such that all the affine and projective roots modulo $\ell^{k_1}$ are simple
and put $N=k_1\deg \ell$. Since $N=k_1\deg \ell$, the proportion of pairs
$(a,b)$ of degree at most $N$ such that $(a:b)\equiv r\mod \ell^k$ for some root
$(r,k)$ equals the probability $\Prob((a:b)\equiv r\mod \ell^k)$. Hence 
$a_\text{hom}(f;N,N/\deg\ell)=\sum_{k=1}^{k_1}k\Prob(v_\ell(F(a,b))=k)$.
Now, since the roots modulo $\ell^{k_1}$ are simple there are at most $\deg
f$ of them. Also, for $a$ and $b$ of degree bounded by $N$, the norm
$F(a,b)$ has degree bounded by $N\deg f+|f|$, with $|f|$ the maximal degree of the coefficients of $f$. Hence
\begin{equation*}
	|a_\text{hom}(f;N)-a_\text{hom}(f;N,N/\deg \ell)|\leq \frac{\deg f (N \deg
	f+|f|)}{q^{N/\deg \ell}}.
\end{equation*}
This further implies 
\begin{equation*}
\lim\sup_{N\rightarrow\infty}a_\text{hom}(f;N)\leq\sum_{k\in\N}k\Prob(v_\ell( F(a,b))=k).
\end{equation*}
We conclude that $a_\text{hom}(f;N)$ converges to $\sum_{k\in\N}k\Prob(v_\ell(
F(a,b))=k)$. 
\end{proof}
\end{document}